\renewcommand{\epsilon}{\varepsilon}
\renewcommand{\phi}{\varphi}
\newcommand{\taumax}{\tau_{\max}}
\newcommand*\diff{\mathop{}\!\mathrm{d}}
\tikzset{->-/.style={decoration={markings,mark=at position .5 with {\arrow{>}}},postaction={decorate}}}
\tikzset{vertex/.style={draw,circle,inner sep=0pt,minimum size=18pt},>=latex'}
\tikzset{c/.style={draw,circle,inner sep=0pt,minimum size=15pt},>=latex'}
\tikzset{dot/.style={draw,circle,inner sep=0pt,minimum size=3pt, fill=black},>=latex'}
\let\c@lemma\c@theorem
\let\c@claim\c@theorem
\let\c@corollary\c@theorem
\let\c@observation\c@theorem
\let\c@definition\c@theorem
\crefname{theorem}{theorem}{theorems}
\Crefname{theorem}{Theorem}{Theorems}
\crefname{lemma}{lemma}{lemmas}
\Crefname{lemma}{Lemma}{Lemmas}
\crefname{claim}{claim}{claims}
\Crefname{claim}{Claim}{Claims}
\crefname{corollary}{corollary}{corollaries}
\Crefname{corollary}{Corollary}{Corollaries}
\crefname{observation}{observation}{observations}
\Crefname{observation}{Observation}{Observations}
\crefname{definition}{definition}{definitions}
\Crefname{definition}{Definition}{Definitions}
\title{Nash Flows Over Time with Tolls} 
\author{%
Shaul Rosner\inst{1}\orcidID{0009-0006-4671-2566}\thanks{Supported by the European Research Council (ERC) under the European Union's Horizon 2020 research and innovation program (grant agreements No. 101170373, No.101077862)} \and
Marc Schr\"oder\inst{2}\orcidID{0000-0002-0048-2826} \and
Laura Vargas Koch\inst{3}\orcidID{0000-0002-7499-5958}%
}
\institute{
Tel Aviv University, Ramat Aviv, Tel Aviv 6997801, Israel\\
\email{srosner@tauex.tau.ac.il} \and
Maastricht University, Tongersestraat 53, 6211 KL Maastricht, The Netherlands\\
\email{m.schroder@maastrichtuniversity.nl} \and
RWTH Aachen University, Kackertstr. 7, 52072 Aachen, Germany\\
\email{vargaskoch@gdm.rwth-aachen.de}
}
\authorrunning{S. Rosner et al.} 
\begin{document}
\maketitle

\begin{abstract}
We study a dynamic routing game motivated by traffic flows. 
The base model for an edge is the Vickrey bottleneck model. That is, edges are equipped with a free flow transit time and a capacity. When the inflow into an edge exceeds its capacity, a queue forms and the following particles experience a waiting time. 
In this paper, we enhance the model by introducing tolls, i.e., a cost each flow particle must pay for traversing an edge.
In this setting we consider non-atomic equilibria, which means flows over time in which every particle is on a cheapest path, when summing up toll and travel time.

We first show that unlike in the non-tolled version of this model, dynamic equilibria are not unique in terms of costs and do not necessarily reach a steady state. As a main result, we provide a procedure to compute steady states in the model with tolls.
\end{abstract}

\section{Introduction}

In a classic paper, Ford and Fulkerson \cite{ford1958constructing} introduced dynamic flows in order to capture non-static behavior of flow systems as is apparent in for example traffic flows. Modeling wise, this means that edges are equipped with a transit time and a capacity, and flow moves over time through the network (and is described by time dependent functions of flow rates). 
Such dynamic flows or flows over time have been well studied, both from an optimization as well as from a game-theoretic perspective.

In the game-theoretic models, the capacity bounds the rate with which flow can exit an edge. If the flow entering an edge exceeds the capacity, a queue forms, and additional flow queues up before leaving the edge. Such edge behavior is denoted as the Vickrey bottleneck model and was introduced already by Vickrey \cite{vickrey1969congestion} in 1969. An equilibrium in this model is denoted as a dynamic equilibrium or Nash flow over time and has been introduced by Koch and Skutella \cite{koch2011nash}. Based on this work, Cominetti, Correa and Larr\'e \cite{CCL15} showed existence of equilibrium states. Moreover, in these works it was shown that an equilibrium, i.e., a Nash flow over time, can be constructed by piecewise linear extension in so called phases. 
Cominetti, Correa and Olver \cite{CCO22} showed that if the inflow does not exceed the minimum cut of the network, all queues remain bounded and there is some point in time, from which onward queues no longer change. In other words, after a finite time a Nash flow over time reaches a phase, that lasts forever. We denote this final phase as a {\em steady state}.
Olver, Sering and Vargas Koch \cite{OSK22} extended the result to the case where the inflow is larger than the capacity of the minimum cut. Clearly, for this case queues cannot be expected to remain constant. Instead, they showed that from some finite point in time onward, each queue either remains constant, or grows at a constant rate.

We augment the standard flows over time model by introducing constant tolls. In our model, every edge has a transit time, a capacity and a toll. Particles in this model aim to minimize their travel time plus their toll, which is denoted as their cost. Note that travel time and toll do not have the same impact on a particle, as travel time leads to a delay in arrival time at later edges, and thus results in different waiting times on these edges.
While tolling questions are well-studied in the static setting, see e.g., \cite{pigou,beckmann,cole,gairing}, in the dynamic setting little is known so far.
Olver and Frascaria \cite{frascaria2020algorithms} showed that in a setting for which particles may choose their departure time, there exist tolls
implementing an earliest arrival flow. Implementing here means that the flow is an equilibrium state in a network with the given tolls.
Recently, Graf, Harks, and Schwarz \cite{graf2025tolls} have characterized all flows that can be implemented by tolls. 
Moreover, they have shown that system optimal flows can be implemented by tolls (see \cite{graf2025system}). 
Note, that their definition of tolls differs from ours in two ways. First, their tolls are time-dependent and are even allowed to be non-continuous and second, particles are split into commodities and value time differently. 
A different interpretation of costs is given by Oosterwijk et al. \cite{OSS} in which the primary objective of particles is to be on time, but the secondary objective is to minimize costs.

In a way, our analysis of tolls takes the opposite perspective. Instead of starting with a given flow, we consider simple, constant tolls and aim to understand how resulting equilibrium flows behave. In the single-commodity case, this turns out to be much more challenging than in the non-tolled setting. In the non-tolled setting, a strict first-in first-out (FIFO) property holds. Therefore, in an equilibrium state a flow particle that enters the network is only influenced by flow that entered the network at an earlier time, and not by flow that enters the network later than the particular particle.
This is no longer the case in the tolled setting, as in certain cases it is preferable for some particles to travel via a longer path with a low toll, resulting in the particles queueing at an edge after particles which entered the network at a later time, but used a shorter path with a higher toll. We refer to \Cref{sec:examples} for a concrete example. 

This induces problems that are similar to problems in multi-commodity instances without tolls. 
For multi-commodity instances the characterization of equilibrium flows is still wide open. Existence is known \cite{CCL15}, but the structure of equilibrium states is much less understood. Only for special cases, where FIFO is still fulfilled, equilibria can be characterized in piece-wise linear phases as in the single-commodity setting, see Iryo and Smith \cite{iryo2017uniqueness}. Moreover, nice properties such as uniqueness of equilibria do not hold any longer
\cite{iryo2011multiple}.
Also in the extension to spillback, properties as uniqueness of equilibria \cite{sering2019nash} and convergence to a steady state \cite{ziemke2023spillback} do not hold any longer. However, here the FIFO property prevails.
This said, it is not surprising that equilibrium states in a setting with tolls are much more challenging to analyze and lack some of the nice properties that Nash flows over time (in the non-tolled setting) fulfill.

\subsection{Our Contribution}
 As our main result, we give a procedure to compute steady states in the Nash flows over time with tolls setting.
 More concretely, we provide a linear program and its dual which characterize steady states for Nash flows over time. This LP is a modified version of the LP introduced by Cominetti, Correa and Olver \cite{CCO22} and then adapted in \cite{OSK22}. However, showing that the state described by the LP is indeed an equilibrium state turns out to be more challenging in the toll setting. 
 For the case where the capacity of the minimum cut is greater or equal to the inflow capacity in the steady state, the difference between toll and transit time resolves and the setting reduces to the standard Nash flow over time setting. However, if the inflow exceeds the capacities and queues continue to grow, there is an interesting trade-off between tolls and transit-times which depends on the growth rate of the queues. 

The LP determines the initial queues and the inflow in a steady state. To make this correspond to an equilibrium, we need to allow for some short phase, where the network fills up with flow. We do so by introducing deficit flows, in which all vertices are treated as a source for a short time period. In this way the network fills up quickly, and after the maximal transit time, we obtain a proper flow which satisfies flow conservation and the equilibrium constraints.

 The LP is based on a specific thin flow solution. As part of our discussion of thin flows, we also provide an alternative definition of thin flows from a sink-based perspective, which is very natural in this setting. This definition is independent of tolls and thus also applicable in the non-tolled model.

As our second main contribution, we present two instances. The first example shows the non-uniqueness of costs of Nash flows over time with tolls, while uniqueness was shown to hold true for Nash flows over time without tolls \cite{OSK22}. 
The second example shows that Nash flows over time with tolls do not reach a steady state after finite time, again in contrast to the behavior of Nash flows over time.

 \subsection{Structure of the paper}
After introducing the model in \Cref{sec:model}, we discuss two interesting examples in \Cref{sec:examples}. These show that crucial properties of standard Nash flows over time do not hold any longer in the setting with tolls. In particular, we cannot hope to show that every equilibrium trajectory reaches a steady state. 
However, in \Cref{sec:steady_state}, we show how to compute a family of steady states. This means that if we determine initial queues, and fill the network with a pre-flow for a finite time period, the resulting state is a steady state, in which the queues all change with a linear rate.
We conclude the paper with a discussion in \Cref{sec:discussion}.

\section{Model}
\label{sec:model}
Let $G=(V,E)$ be a directed graph with two special vertices, a source $s\in V$ and a sink $t\in V$. We assume that every vertex is reachable from $s$, and $t$ is reachable from every vertex. This is w.l.o.g. as all other vertices can simply be removed. Each edge $e\in E$ has a capacity $\nu_e > 0$, a transit time $\tau_e \geq 0$ and a fixed toll $c_e \geq 0$.
Flow has to travel from $s$ to $t$ and departs from $s$ at time $\theta$ with a constant inflow rate $u > 0$.
We call the tuple $(G=(V,E), u, (\tau_e)_{e \in E}, (\nu_e)_{e \in E}, (c_e)_{e \in E})$ with properties as defined above a \emph{tolled flow over time instance}.
We denote by $\delta^-(v)$ and $\delta^+(v)$ respectively the incoming and outgoing edges of vertex $v$, and by $\delta(v)$ the union of the two. Similarly, $\delta^-(S)$ and $\delta^+(S)$ are the sets of edges entering and leaving a set of vertices $S$ respectively, and $\delta(S)$ is their union.

A \emph{flow over time} is described by a set of edge inflow and outflow rates $f=(f_e^+,f_e^-)_{e\in E}$. We assume that the edge inflow and outflow rates satisfy flow conservation for all $v\in V$ and for
almost all $\theta\geq 0$, i.e.,
\[\sum_{e\in\delta^+(v)}f_e^+(\theta)-\sum_{e \in \delta^-(v)}f_e^-(\theta)=\begin{cases}
u &\text{ if }v=s \; ,\\
0 &\text{ if }v\neq s,t \; ,\\
\leq 0 & \text{ if } v=t \; .
\end{cases}\] 
We define the cumulative inflow and outflow function of an edge as
\[
F_e^+(\theta) \coloneqq\int_0^\theta f_e^+(\xi) \diff \xi \;, \quad
F_e^-(\theta) \coloneqq \int_0^\theta f_e^-(\xi) \diff \xi \; .
\]
If the inflow rate $f_e^+(\theta)$ is larger than the edge capacity $\nu_e$, a queue will grow. We denote the mass of this queue at time $\theta$ by $z_e(\theta) = F_e^+(\theta) - F_e^-(\theta+\tau_e)$. We assume that queues operate at capacity and according to the first-in first-out (FIFO) principle, i.e.,
\[f_e^-(\theta+\tau_e)=\begin{cases}
\nu_e &\text{ if }z_e(\theta)>0 \; ,\\
\min\{f_e^+(\theta),\nu_e\} &\text{ if }z_e(\theta)=0 \;.
\end{cases}\] which implies that the queueing time on edge $e$ at time $\theta$ is $q_e(\theta)=z_e(\theta)/\nu_e$. A particle which enters the edge at time $\theta$ therefore leaves the edge at time $T_e(\theta) = \theta + q_e(\theta) + \tau_e$.

Given a flow over time $f$, define $c_v(\theta)$ as the \emph{minimum cost to reach $t$} starting from $v$ at time $\theta$. Then $c_v(\theta)\leq \tau_e+q_e(\theta)+c_e+c_w(\theta+\tau_e+q_e(\theta))$ for all $e=vw\in E$. Moreover, there exists at least one outgoing edge for which this inequality is tight. Therefore, we define
$$c_v(\theta)\coloneqq\begin{cases}
0&\text{ if }v=t\\
\min\limits_{e=vw\in E}\tau_e+q_e(\theta)+c_e+c_w(\theta+\tau_e+q_e(\theta))&\text{ otherwise.}
\end{cases}$$ 
We call edge $e=vw$ \emph{active} at time $\theta$ if it attains the above minimum. The set of active edges at time $\theta$ is denoted by $E_{\theta}$, i.e.,  \[
E_{\theta} \coloneqq \{ e=vw \in E \mid c_v(\theta)=\tau_e+q_e(\theta)+c_e+c_w(\theta+\tau_e+q_e(\theta)) \}.
\]

Note, that the collection of functions $(c_v)_{v \in V}$ replaces the earliest arrival labels, which we also refer to as ``$\ell$-labels''. 
The earliest arrival labels where introduced by \cite{koch2011nash} and are since then used to describe Nash flows over time. 
A crucial difference is that instead of considering a particle which enters the source at time $\theta$ and asking what is the earliest time $\ell_v(\theta)$ this particle can reach vertex $v$, here we consider all the particles that reached vertex $v$ by time $\theta$ and their remaining required cost, i.e., travel time plus toll, in order to reach $t$. Note that the particles that reach $v$ at time $\theta$ may have spent a different amount of time in the network and paid a different toll to arrive at $v$. However, their remaining cost is necessarily the same. We emphasize this is a significant difficulty in using the standard earliest arrival labels for a tolled model -- as there is no unique arrival time or cost of particles in vertex $v$ at time $\theta$.

Note that even if $c_e=0$ for every edge $e \in E$, the values of $c_v$ are not equal to the values of $\ell_v$ according to the standard notation, as the $\ell$ labels are generally used to describe the prefix of the path, while the $c$ labels describe the suffix of the path.
\begin{definition}
A flow over time $f$ is called a \emph{dynamic equilibrium (DE) with tolls} if for all $e\in E$,
	$f^+_e(\theta) > 0 $ implies $ e\in E_{\theta}$ for almost all $\theta\geq 0$. 
    
\end{definition}

Existence of dynamic equilibria in settings with a fixed time horizon is given by Graf and Harks \cite[Theorem 3.5]{GH23}. Note that our model fulfills the assumptions necessary to apply their result as it inherits the continuity properties of the Vickrey bottleneck model, and our tolls are an additive fixed cost.

In this work, we study steady states of Nash flows over time with tolls.
\begin{definition}
    A flow over time $f$ is in steady state from time $T$ onward if $q_e'(\theta)$ is constant for all $e\in E$ and all $\theta \geq T$.
\end{definition}

This definition ensures stable queue growth over time. While the actual flow values may fluctuate between unsaturated edges, the experienced cost over different paths remains unchanged. In \cite{OSK22}, an equilibrium is denoted to be in weak steady state, when queues just change linearly, while the strong steady state definitions asks that all costs (earliest arrival labels) develop linearly.
For our results, on the positive side, we provide a  constructive proof yielding a constant flow, which is in steady state in the strongest possible sense (after some time). In the negative example, we provide an example where the derivatives of the queues are not constant and thus, even in the weak sense no steady state is reached. For convenience we define the notion of steady state for any flow and not just for equilibria as it was done in the literature. 

In the following, we define a flow using constant inflow rate functions. Clearly, such a flow is in steady state as queues can only be linear.

\subsection{Thin Flows}

Thin flows with resetting are a crucial component in the characterization of Nash flows over time. They were introduced by Koch and Skutella \cite{koch2011nash} and their existence was proven by Cominetti, Correa and Larr\'e \cite{CCL15}.
Thin flows over time with resetting are used to characterize the linear change of the earliest arrival time labels at a certain point in time for Nash flows over time. 
However, for characterizing the linear behavior in a steady state, a special thin flow with resetting is considered. Namely, the set of considered resetting edges is empty. 
We will focus on this special subclass and denote them as thin flows (as no resetting is happening).

Recall that in contrast to Nash flows over time our characterization of equilibria and our definition of label functions is done from the sink instead of the source. For this reason we will first define a thin flow in the classical way for the special case we need and afterwards provide a definition for a thin flow from a sink perspective. The latter is more natural in our setting while the first follows the lines of the literature on Nash flows over time. Note that thin flows from a sink perspective are independent of tolls.

\begin{definition}
  Let $(y,\lambda)$ be pair with $y \in \mathbb{R}_{\geq 0}^E$ and $\lambda \in \mathbb{R}_{\geq 0}^V$. We say $(y,\lambda)$ is a source thin flow if $y$ is an $s$-$t$-flow of value $u$ and 
\begin{align}
    \lambda_s&=1, \tag{s-TF1} \label{eq:stf1}\\\label{eq:stf2}
    \lambda_w&=\min_{e=vw\in E}\max\{\lambda_v,y_e/\nu_e\}\text{ for all }w\in V\setminus\{s\}, \tag{s-TF2}\\
    \lambda_w&=\max\{\lambda_v,y_e/\nu_e\}\text{ for all }e=vw\in E\text{ with }y_e>0. \tag{s-TF3} \label{eq:stf3}
\end{align}  
\end{definition}

In the following when referring to a thin flow, we refer to a source thin flow.

\begin{theorem}[\cite{Koc12}]
\label{thm:existence_thin_flows}
 A source thin flow always exists and can be computed in polynomial time. 
\end{theorem}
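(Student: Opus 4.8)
The plan is to prove existence and polynomial-time computability together, by exhibiting an algorithm that constructs a source thin flow and terminates after at most $|V|$ combinatorial phases. Before describing it, I would isolate the structural facts that make the system (s-TF1)--(s-TF3) tractable. The crucial one is that for \emph{any} fixed $s$-$t$-flow $y$ of value $u$, the equations (s-TF1)+(s-TF2) determine $\lambda$ uniquely: writing $a_e \coloneqq y_e/\nu_e$, an induction along a path $s=v_0,v_1,\dots,v_k=w$ shows that the right-hand side of (s-TF2) forces $\lambda_w$ to equal the value of a minimum-bottleneck $s$-$w$ path with edge values $a_e$, floored at $\lambda_s=1$. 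Since every vertex is reachable from $s$, this is well defined, it is $\ge 1$ everywhere, and it depends continuously (indeed piecewise-linearly) on $y$; call it $\lambda(y)$. Condition (s-TF3) then says exactly that $y$ places flow only on edges that are \emph{tight} for $\lambda(y)$, i.e. on $e=vw$ with $\lambda_w=\max\{\lambda_v,a_e\}$. So a source thin flow is an $s$-$t$-flow of value $u$ supported on its own tight subgraph, and existence already follows from Brouwer's fixed-point theorem applied to a suitable continuous self-map of the compact convex polytope of $s$-$t$-flows of value $u$ whose fixed points are precisely such flows; this is the specialization of the Cominetti--Correa--Larr\'e argument \cite{CCL15} to an empty set of resetting edges.

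For the polynomial-time bound I would instead build the flow level by level, in increasing order of label value. Maintain a set $S$ of already-labelled vertices, initialized to $\{s\}$ with $\lambda_s=1$, together with a partial flow. At a generic phase, for a threshold $\mu \ge \max_{v\in S}\lambda_v$ call an edge $e=vw$ \emph{open at level $\mu$} if $v\in S$ with $\lambda_v\le\mu$, and give it capacity $\mu\nu_e$ (routing at most $\mu\nu_e$ through it keeps $\max\{\lambda_v,a_e\}\le\mu$). Running $\mu$ upward from its current value and computing a maximum $s$-$t$-flow in the network of open, capped edges is a parametric max-flow problem whose value is continuous, nondecreasing and piecewise linear in $\mu$; let $\mu^*$ be the smallest $\mu$ at which either the flow value reaches $u$, or the set of vertices that can consistently be assigned a label $\le\mu$ strictly grows (concretely, a vertex $w\notin S$ stops being separable from $s$ by a minimum cut of the capped network). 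In the first case we stop and output the flow; in the second we set $\lambda_w\coloneqq\mu^*$, add $w$ (and every vertex forced to value $\mu^*$) to $S$, commit the corresponding flow, and repeat. Each phase is one parametric max-flow computation, hence polynomial, and since each phase labels at least one new vertex there are at most $|V|$ phases.

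The main obstacle is proving that this level-by-level scheme is correct — in particular that it never has to revise an earlier label and that the committed flow can always be chosen to use only tight edges while respecting flow conservation and the capacities already forced by labelled downstream vertices. The key lemma to establish is: at every threshold $\mu$, the maximum value of an open, capped flow equals the largest amount of flow that can reach $t$ with all vertex labels $\le\mu$, and any such maximum flow admits a path/cycle decomposition in which every used edge $e=vw$ is tight for the labels at the level $\lambda_w$ of its head. Proving this requires combining the min-cut structure produced by the parametric max-flow with an induction over the phases (edges at lower levels are already frozen at their tight capacities, and an augmentation into newly opened edges does not disturb tightness downstream). Once the lemma is in hand, the union of the flows committed across all phases together with the labels $\lambda$ satisfies (s-TF1)--(s-TF3) by construction, yielding simultaneously existence and a polynomial-time algorithm; since existence alone also follows from the Brouwer argument above, the real content of \cite{Koc12} being re-derived here is the combinatorial, algorithmic part.
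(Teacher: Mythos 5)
The paper does not prove this theorem; it is cited as a known result from Koch's thesis \cite{Koc12}, so there is no in-paper argument to compare against. Evaluating your reconstruction on its own terms: the first half is sound. Given $y$, the labels $\lambda(y)$ from (s-TF1)--(s-TF2) are indeed the bottleneck-shortest-path values floored at $1$, and existence of a source thin flow does follow by specializing the Cominetti--Correa--Larr\'e fixed-point argument \cite{CCL15} to an empty resetting set.

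The algorithmic half, however, has a genuine gap beyond the one you flag. Your parametric step treats each opened edge $e=vw$ with $\lambda_v\le\mu$ as having capacity interval $[0,\mu\nu_e]$, but condition (s-TF3) for an edge with $\lambda_v<\lambda_w=\mu$ forces $y_e\in\{0,\mu\nu_e\}$: such an edge must be fully saturated or unused, not merely capped, since otherwise $\max\{\lambda_v,y_e/\nu_e\}<\lambda_w$. A maximum flow at threshold $\mu$ can perfectly well route a positive, non-saturating amount through such an edge, violating (s-TF3). The claim that would rescue this -- that some maximum flow saturates every used edge whose tail has a strictly smaller label, because those edges are exactly the edges crossing the min cut that defines the next threshold -- is the heart of the correctness argument in \cite{Koc12}, and you state it as a ``key lemma'' without proving it. So the existence argument stands, but the polynomial-time half is a plausible plan rather than a proof, and since the paper treats the theorem as a black box the entire algorithmic burden falls on you here.
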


This separation will be useful for proving the following result, which we make use of in later proofs. 

\begin{lemma}\label{lem:lam}
   Let $(y,\lambda)$ be a source thin flow. Then $\lambda_s\leq \lambda_v \leq \lambda_t$ for all $v\in V$. 
\end{lemma}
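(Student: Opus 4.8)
The plan is to prove the two inequalities $\lambda_s \le \lambda_v$ and $\lambda_v \le \lambda_t$ separately, in each case exploiting the min/max structure of \eqref{eq:stf2}--\eqref{eq:stf3} together with connectivity of $G$. For the lower bound $\lambda_s \le \lambda_v$, I would argue that $\lambda_s = 1$ is actually the minimum value taken by $\lambda$. Suppose for contradiction that $\min_{v} \lambda_v < 1$, and let $W = \{v : \lambda_v = \lambda_{\min}\}$ be the (nonempty) set of minimizers; since $\lambda_s = 1 > \lambda_{\min}$ we have $s \notin W$. Pick any $w \in W$. By \eqref{eq:stf2} there is an edge $e = vw$ with $\lambda_w = \max\{\lambda_v, y_e/\nu_e\} \ge \lambda_v \ge \lambda_{\min} = \lambda_w$, forcing $\lambda_v = \lambda_{\min}$, i.e.\ $v \in W$. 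Hence $W$ has no incoming edges from $V \setminus W$; but every vertex is reachable from $s \notin W$, contradiction. (One has to double-check the degenerate case $w = s$ cannot occur, which is fine since $s \notin W$.)

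For the upper bound $\lambda_v \le \lambda_t$, the natural route is to follow flow from $v$ toward $t$. The idea: starting at any vertex $v$ with $\lambda_v > 0$ --- or more carefully, tracking positive flow --- one can build a path $v = v_0, v_1, \dots, v_k = t$ along edges $e_i = v_{i-1}v_i$ with $y_{e_i} > 0$, using the fact that $y$ is an $s$-$t$-flow so flow out of any non-sink vertex that has incoming flow must continue (and from $v$ we can reach $t$ in $G$, though we need a flow-carrying path). Then \eqref{eq:stf3} gives $\lambda_{v_i} = \max\{\lambda_{v_{i-1}}, y_{e_i}/\nu_{e_i}\} \ge \lambda_{v_{i-1}}$, so $\lambda_t = \lambda_{v_k} \ge \lambda_{v_0} = \lambda_v$. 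The subtlety is vertices carrying no flow: if $y_e = 0$ on all of $\delta^+(v) \cup \delta^-(v)$, then \eqref{eq:stf3} is vacuous there, so I would instead handle such $v$ via \eqref{eq:stf2}, which gives $\lambda_v = \min_{e=uv} \max\{\lambda_u, 0\} \le \lambda_u$ for some predecessor $u$; chasing predecessors back toward $s$ and combining with $\lambda_s = 1 \le \lambda_t$ (from the lower bound applied at $t$, or directly) should close it --- though the cleanest phrasing is probably: show $\lambda_t = \max_v \lambda_v$ by the mirror of the argument above, i.e.\ let $W' = \{v : \lambda_v = \lambda_{\max}\}$ and show $t \in W'$ by arguing $W'$ has no outgoing flow-edges and using a flow-decomposition/reachability argument to reach a contradiction if $t \notin W'$.

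Concretely, for $\lambda_t = \max_v \lambda_v$: let $\lambda_{\max} = \max_v \lambda_v$ and $W' = \{v : \lambda_v = \lambda_{\max}\}$, $W' \ne \emptyset$. For $w \in W' \setminus \{t\}$, the flow-conservation at $w$ with inflow possibly zero: if $w$ has positive flow on some outgoing edge $e = wv$, then \eqref{eq:stf3} gives $\lambda_v = \max\{\lambda_w, y_e/\nu_e\} \ge \lambda_w = \lambda_{\max}$, so $v \in W'$. If $w$ has no positive outgoing flow, then since flow conservation at $w \ne s,t$ balances and inflow $\ge 0$, actually $w$ has no flow at all through it; then by \eqref{eq:stf2} applied to $w$, pick the minimizing edge $e = uw$: $\lambda_w = \max\{\lambda_u, y_e/\nu_e\}$; since $y_e \le \nu_e$ would not immediately bound things, instead observe $\lambda_{\max} = \lambda_w = \max\{\lambda_u, y_e/\nu_e\}$, and if $y_e/\nu_e = \lambda_{\max}$ we would need an edge into $w$ with $y_e = \lambda_{\max}\nu_e > 0$, contradicting no flow through $w$; so $\lambda_u = \lambda_{\max}$, i.e.\ $u \in W'$ --- this shows $W'$ is closed under taking predecessors too in the no-flow case, hence contains $s$, hence $\lambda_s = \lambda_{\max} = 1$ and all $\lambda_v = 1 \le 1 = \lambda_t$, done trivially. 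Otherwise $W'$ contains a flow-carrying path that must terminate at $t$ (by flow decomposition, any unit of flow leaving $W'$-vertices along positive edges stays in $W'$ and flow can only terminate at $t$), giving $t \in W'$. I expect the main obstacle to be handling vertices with zero flow cleanly --- the min in \eqref{eq:stf2} versus the conditional equality in \eqref{eq:stf3} --- and making the reachability/flow-decomposition step rigorous without overcomplicating; a clean lemma of the form ``from any vertex there is a path of positive-$y$ edges to $t$ or the vertex carries no flow and is reachable from $s$ through zero-flow edges'' would streamline both halves.
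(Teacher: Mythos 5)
Your plan for the lower bound contains a genuine gap. From "$w\in W$ has a predecessor $v$ (the one achieving the $\min$ in \eqref{eq:stf2}) with $v\in W$" you jump to "hence $W$ has no incoming edges from $V\setminus W$." That does not follow: $w$ may have additional predecessors outside $W$ that simply do not attain the $\min$, and \eqref{eq:stf2} is silent about them. All you have established is that every vertex of $W$ has \emph{some} predecessor in $W$; tracing these predecessors from an arbitrary $w_0\in W$ (with $s\notin W$) only produces a directed cycle inside $W$, and a cycle is not in tension with reachability from $s$. To actually derive a contradiction you must invoke the reachability of $t$ from every vertex, which your argument never uses: take a path from any $w\in W$ to $t$; since $t\in V^+$ and \eqref{eq:stf3} along a flow-decomposition path from $s$ forces $\lambda\geq\lambda_s=1>\lambda_{\min}$ on $V^+$, we have $t\notin W$, so the path exits $W$ across some edge $e=ab$ with $a\in W$, $b\notin W$; if $y_e=0$ then \eqref{eq:stf2} gives $\lambda_b\leq\max\{\lambda_a,0\}=\lambda_{\min}$, and if $y_e>0$ then $a\in V^+$ forces $\lambda_a\geq 1>\lambda_{\min}$ — contradiction either way. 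The upper bound half of your proposal has the mirror-image gap: "this shows $W'$ is closed under taking predecessors too in the no-flow case, hence contains $s$" again only yields a cycle in $W'$, not $s\in W'$.

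The paper sidesteps all of this by arguing directly rather than by contradiction. It first handles $v\in V^+$ using \eqref{eq:stf3} along a flow-carrying $s$--$v$ and $v$--$t$ path (flow decomposition), obtaining $\lambda_s\leq\lambda_v\leq\lambda_t$ immediately. For $v\notin V^+$ it takes a path from $s$ to $v$ and its \emph{last} vertex $u$ in $V^+$; the remaining segment $u\rightsquigarrow v$ uses only zero-flow edges, so \eqref{eq:stf2} makes $\lambda$ non-increasing along it and $\lambda_v\leq\lambda_u\leq\lambda_t$. Dually, on a path from $v$ to $t$ the \emph{first} vertex $w$ in $V^+$ yields $\lambda_s\leq\lambda_w\leq\lambda_v$. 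This is essentially the "clean lemma" you asked for at the end, except the relevant observation is about the first/last $V^+$-vertex on a path touching $t$ or $s$ (not about reaching $t$ by positive-$y$ edges from an arbitrary vertex, which fails for $v\notin V^+$).
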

\begin{proof}
    First, consider all $v\in V^+$. By \eqref{eq:stf3}, we have that $\lambda_v\leq \lambda_w$ for all $e=vw\in E$ with $y_e>0$. In particular, by moving along a flow-carrying path, we conclude that $\lambda_s\leq \lambda_v\leq \lambda_t$ for all $v\in V^+$. 

Second, consider all $v\in V\setminus V^+$. Given that all $v\in V\setminus V^+$ are reachable from $s$, there exists a path from $s$ to $v$. Consider the last vertex $u\in V^+$ on this path. Then since $u \in V^+$ we have $\lambda_u \leq \lambda_t$, and by \eqref{eq:stf2} $\lambda_v \leq \lambda_u$, so $\lambda_v\leq\lambda_u\leq \lambda_t$. Given that $t$ is reachable from all $v\in V\setminus V^+$, there exists a path from $v$ to $t$. Consider the first vertex $w\in V^+$ on this path. Then since $w \in V^+$ we have $\lambda_s \leq \lambda_w$, and by \eqref{eq:stf2} $\lambda_w \leq \lambda_v$, so  $\lambda_s\leq\lambda_w\leq \lambda_v$. Putting the equations together, we get $\lambda_s \leq \lambda_v \leq \lambda_t$
\end{proof}

Next, we define a more natural variant for our setting, the sink thin flow $(x, \mu)$. 
Observe that we normalize $x$ by a factor of $\frac{\mu_v+1}{\mu_s+1}$. We do this because later on, we show that for all flow carrying vertices, $\mu_v = \frac{\lambda_t}{\lambda_v}-1$, so $\lambda_t = \mu_s+1$, and then $\lambda_v = \frac{\mu_s+1}{\mu_v+1}$. In other words, we scale $x_e$ for $e=vw$ by $\frac{1}{\lambda_v}$. In the non-tolled setting, this scaled value can easily be seen to correspond to the inflow rate into $e$ at time $\ell_v(\theta)$. 
Also, in our setting $x_e$ can be thought of the edge inflow rate, and $\mu_v$ as the derivative of the cost of a particle for reaching the sink.


\begin{definition}
   Let $(x,\mu)$ be a pair with $x \in \mathbb{R}_{\geq 0}^E$ and $\mu \in \mathbb{R}_{\geq 0}^V$. We say $(x,\mu)$ is a sink thin flow if $\left(\frac{x_e(\mu_v+1)}{\mu_s+1}\right)_{e=vw \in E}$ is an $s$-$t$-flow of value $u$ and  
\begin{align}
    \mu_t&=0, \tag{t-TF1} \label{eq:ttf1}\\
    \mu_v&=\min_{e=vw\in E}\max\{\mu_w,(1+\mu_w)\cdot x_e/\nu_e-1\}\text{ for all }v\in V\setminus\{t\}, \tag{t-TF2} \label{eq:ttf2}\\
    \mu_v&=\max\{\mu_w,(1+\mu_w)\cdot x_e/\nu_e-1\}\text{ for all }e=vw\in E\text{ with }x_e>0 \tag{t-TF3} \label{eq:ttf3}.
\end{align} 
\end{definition}

Source thin flows and sink thin flows are not equivalent. To characterize their difference, a useful distinction is between vertices the flow travels through, and the vertices through which it does not.  Formally, we define the set of {\em flow-carrying vertices} by $V^+\coloneqq\{v\in V \mid \exists\: e=vw\in E\text{ with }y_e>0\text{, or }\exists\: e=uv\in E\text{ with }y_e>0\}$.

\begin{lemma}
\label{lem:source_sink_thin_flow}
    Let $(y,\lambda)$ be a source thin flow. Then there exists a sink thin flow $(x,\mu)$, with $x_e=y_e/\lambda_v$ for all $e=vw\in E$ and $\mu_v=\lambda_t/\lambda_v-1$ for all $v\in V^+$.
\end{lemma}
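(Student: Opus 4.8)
The plan is to exhibit the sink thin flow explicitly and verify the three defining identities together with the flow constraint. Set $x_e\coloneqq y_e/\lambda_v$ for every $e=vw\in E$, set $\mu_v\coloneqq \lambda_t/\lambda_v-1$ for every $v\in V^+$, and extend $\mu$ to $V\setminus V^+$ by the backward recursion $\mu_v\coloneqq\min_{e=vw\in E}\mu_w$ (well-defined since $t\in V^+$ is reachable from every vertex). Observe that $s,t\in V^+$, since an $s$-$t$-flow of value $u>0$ leaves $s$ and enters $t$; hence $\mu_t=\lambda_t/\lambda_t-1=0$, i.e.\ \eqref{eq:ttf1}, and $\mu_s+1=\lambda_t/\lambda_s=\lambda_t$ by \eqref{eq:stf1}. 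Moreover $x\geq 0$ is clear, and on $V^+$ we get $\mu_v=\lambda_t/\lambda_v-1\geq 0$ from \Cref{lem:lam} together with $\lambda_v\geq\lambda_s=1>0$. The flow constraint then follows by substituting the definitions (in particular $1+\mu_v=\lambda_t/\lambda_v$) and invoking the flow-conservation identities of $y$; note that edges leaving $V\setminus V^+$ carry no $y$-flow and hence no $x$-flow, so they contribute nothing on either side.

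The core of the proof is the translation of \eqref{eq:stf2}–\eqref{eq:stf3} into \eqref{eq:ttf2}–\eqref{eq:ttf3} on $V^+$ via the single identity $1+\mu_v=\lambda_t/\lambda_v$. For an edge $e=vw$ with $x_e>0$ (equivalently $y_e>0$, forcing $v,w\in V^+$) one computes
\[
\max\!\big\{\mu_w,\;(1+\mu_w)\,x_e/\nu_e-1\big\}
\;=\;\frac{\lambda_t}{\lambda_w\,\lambda_v}\,\max\{\lambda_v,\,y_e/\nu_e\}\;-\;1 ,
\]
and this equals $\mu_v=\lambda_t/\lambda_v-1$ precisely when $\lambda_w=\max\{\lambda_v,y_e/\nu_e\}$, i.e.\ exactly \eqref{eq:stf3}; this gives \eqref{eq:ttf3}. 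For \eqref{eq:ttf2} at $v\in V^+\setminus\{t\}$: flow conservation of $y$ guarantees a flow-carrying out-edge, on which the bracket above equals $\mu_v$ by \eqref{eq:stf3}; for any out-edge $vw$ with $y_e>0$ the same computation together with $\lambda_w\leq\max\{\lambda_v,y_e/\nu_e\}$ — which is \eqref{eq:stf2} for $w$, specialized to the edge $vw$ — shows the bracket is $\geq\mu_v$; and for an out-edge with $y_e=0$ the bracket is just $\mu_w$, so it remains to check $\mu_w\geq\mu_v$. Hence the minimum in \eqref{eq:ttf2} is attained and equals $\mu_v$.

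I expect the main obstacle to be the vertices in $V\setminus V^+$, where the lemma leaves $\mu$ free and where two things must be confirmed. First, \eqref{eq:ttf2} at such a vertex $v$: every out-edge $vw$ has $y_{vw}=0$ and thus $x_{vw}=0$, so the right-hand side of \eqref{eq:ttf2} simplifies to $\min_{e=vw}\mu_w$, which holds by the very definition of $\mu$ on $V\setminus V^+$. Second, the inequality $\mu_w\geq\mu_v$ deferred above in the case $w\in V\setminus V^+$: here the key observation is that $\lambda$ is non-increasing along any directed path all of whose edges have their tail outside $V^+$ (iterate \eqref{eq:stf2}, since such edges carry no $y$-flow); consequently the recursion value satisfies $\mu_w\geq\lambda_t/\lambda_w-1$, and combining with $\lambda_w\leq\lambda_v$ (again \eqref{eq:stf2} on the edge $vw$, which has $y_{vw}=0$) yields $\mu_w\geq\lambda_t/\lambda_v-1=\mu_v$; the same bound and \Cref{lem:lam} give $\mu_w\geq 0$. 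Putting \eqref{eq:ttf1}, \eqref{eq:ttf3}, \eqref{eq:ttf2} and the flow constraint together shows that $(x,\mu)$ is a sink thin flow with $x_e=y_e/\lambda_v$ on $E$ and $\mu_v=\lambda_t/\lambda_v-1$ on $V^+$, as claimed.
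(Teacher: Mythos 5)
Your proof is correct and follows the same basic strategy as the paper: define $\mu_v=\lambda_t/\lambda_v-1$ on $V^+$ and $x_e=y_e/\lambda_v$, translate \eqref{eq:stf2}--\eqref{eq:stf3} into \eqref{eq:ttf2}--\eqref{eq:ttf3} via the identity $1+\mu_v=\lambda_t/\lambda_v$, and separately handle $V\setminus V^+$. Two aspects are done a bit differently and are worth pointing out. First, your unified algebraic identity
\[
\max\{\mu_w,(1+\mu_w)x_e/\nu_e-1\}=\frac{\lambda_t}{\lambda_w\lambda_v}\max\{\lambda_v,y_e/\nu_e\}-1
\]
collapses the paper's two-case analysis ($\lambda_v\geq y_e/\nu_e$ versus $\lambda_v<y_e/\nu_e$) into one line, which is cleaner; you also take the trouble to verify $\mu\geq 0$ and the normalized-flow requirement, which the paper leaves implicit. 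Second, the extension of $\mu$ to $V\setminus V^+$: the paper defines $\mu_v$ directly as $\min_{p\in P_{vt}}\mu_{u_p}$ over simple $v$-$t$ paths, whereas you pose the local recursion $\mu_v=\min_{e=vw}\mu_w$ and assert it is ``well-defined since $t$ is reachable.'' That justification is a little too quick: in a digraph with cycles contained in $V\setminus V^+$, the recursion $\mu_v=\min_{e=vw}\mu_w$ is a fixed-point equation that can admit several solutions (e.g.\ pushing a constant below the true value around a cycle), and reachability of $t$ only guarantees an exit exists, not uniqueness of the fixed point. The intended solution is clearly the greatest fixed point, equivalently the paper's path-based formula $\mu_v=\min\{\mu_u:u\in V^+$ reachable from $v$ through $V\setminus V^+\}$, and indeed your later argument (that $\lambda$ is non-increasing along $V\setminus V^+$ paths, hence $\mu_w\geq\lambda_t/\lambda_w-1$) implicitly uses exactly this path-unfolded characterization, not the raw recursion. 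So the gap is cosmetic rather than substantive, but you should state the definition on $V\setminus V^+$ in path form (or explicitly take the greatest fixed point / process $V\setminus V^+$ by the natural reachability order to $V^+$) to make the well-definedness claim watertight. With that fix your verification of \eqref{eq:ttf2} on $V\setminus V^+$ and of the deferred inequality $\mu_w\geq\mu_v$ is complete, and in fact more explicit than the paper's ``by construction'' remark.
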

\begin{proof}
    Note that $t \in V^+$ and thus by definition $\mu_t = \lambda_t / \lambda_t - 1 = 0$, matching \eqref{eq:ttf1}.

We now prove that \eqref{eq:ttf3} holds for all flow carrying edges. Let $e=vw\in E$ with $x_e>0$. Observe that $x_e>0$ if and only if $y_e>0$. From \eqref{eq:stf3}, $\lambda_w = \max\{\lambda_v, y_e/\nu_e\}$. 

We first consider the case: $\lambda_v \geq y_e/\nu_e$. Then $\lambda_v=\lambda_w \geq y_e/\nu_e=\lambda_vx_e/\nu_e$, and thus $x_e/\nu_e \leq 1$. Therefore, $(1+\mu_w)\cdot x_e/\nu_e - 1 \leq \mu_w$, so as $\max\{\mu_w,(1+\mu_w)\cdot x_e/\nu_e-1\}= \mu_w$. We still have to show that $\mu_v =\mu_w$. As both $v$ and $w$ are in $V^+$, by definition,
\[\mu_v = \lambda_t/\lambda_v-1=\lambda_t/\lambda_w - 1 = \mu_w.\]

We are left with the case: $\lambda_v<y_e/\nu_e$. Since $y_e/\nu_e=\lambda_vx_e/\nu_e$, by definition $x_e/\nu_e> 1$. Additionally, from \eqref{eq:stf3} $\lambda_w=y_e/\nu_e$. Therefore $(1+\mu_w)x_e/\nu_e-1>\mu_w$, so as $\max\{\mu_w,(1+\mu_w)\cdot x_e/\nu_e-1\} = (1+\mu_w)x_e/\nu_e-1$. We are left to show that $\mu_v=(1+\mu_w)x_e/\nu_e-1$. We have that
\[
    \mu_v = \lambda_t/\lambda_v - 1=x_e\lambda_t/y_e - 1=x_e\lambda_t/(\lambda_w\nu_e) - 1=(1+\mu_w)\cdot x_e/\nu_e-1,
\]
where the first equality follows by definition as $v \in V^+$, the second equality follows from $x_e=y_e/\lambda_v$, the third equality from $\lambda_w=y_e/\nu_e$, and the fourth equality from $\mu_w = \lambda_t/\lambda_w - 1$ which holds by definition as $w \in V^+$. 

We next prove that \eqref{eq:ttf2} holds for all $v\in V^+\setminus\{t\}$. Consider an edge $e = vw\in E$ with $x_e=0$, so by definition $y_e=0$. Therefore, $\max\{\lambda_v, y_e/\nu_e\} = \lambda_v$, so from \eqref{eq:stf2}, we know that $\lambda_w \leq \lambda_v$. As $(1+\mu_w)x_e/\nu_e -1 = -1 < \mu_w$, we use this to prove that $\mu_v \leq \mu_w$. By definition (as $v \in V^+$ by assumption):
\[\mu_v =  \lambda_t/\lambda_v - 1 \leq \lambda_t/\lambda_w - 1=\mu_w.\]

Note that for vertices $v \in V^+$, the minimum is met from \eqref{eq:stf3}. In order to complete the proof, we must first define a labeling for vertices $v \in V \setminus V^+$ and then show that for every $v\in V\setminus V^+$ there exists $e=vw$  such that $\mu_v = \max \{\mu_w, (1+\mu_w)\cdot x_e/\nu_e -1\}$. Given $v\in V\setminus V^+$, consider the set of simple $v$-$t$ paths $P_{vt}$. For each path $p \in P_{vt}$ consider the first flow-carrying vertex $u_p\in V^+$ on the path $p$. Note that as $t \in V^+$, $u_p$ is well defined. Then, define $\mu_v=\min_{p \in P_{vt}}\mu_{u_{p}}$. By construction, \eqref{eq:ttf2} is satisfied.
\end{proof}

Note that \Cref{thm:existence_thin_flows} and \Cref{lem:source_sink_thin_flow} (respectively, its proof) imply the following corollary. 
\begin{corollary}
    A sink thin flow always exists and can be computed in polynomial time.
\end{corollary}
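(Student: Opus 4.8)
The plan is to derive the corollary directly from the two results it cites, without reproving anything. The statement has two parts: \emph{existence} of a sink thin flow, and \emph{polynomial-time computability}. For existence, I would invoke \Cref{thm:existence_thin_flows} to obtain a source thin flow $(y,\lambda)$; then \Cref{lem:source_sink_thin_flow} hands us a sink thin flow $(x,\mu)$ explicitly, with $x_e = y_e/\lambda_v$ for $e = vw \in E$ and $\mu_v = \lambda_t/\lambda_v - 1$ for $v \in V^+$, and with the labels on $V \setminus V^+$ fixed by the construction in the proof (take $\mu_v = \min_{p \in P_{vt}} \mu_{u_p}$, where $u_p$ is the first flow-carrying vertex on the simple $v$-$t$ path $p$). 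So existence is immediate.

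For the complexity claim I would argue in two steps. First, by \Cref{thm:existence_thin_flows}, a source thin flow $(y,\lambda)$ can be computed in polynomial time. Second, the map $(y,\lambda)\mapsto(x,\mu)$ used in \Cref{lem:source_sink_thin_flow} is computable in polynomial time: the values $x_e = y_e/\lambda_v$ and $\mu_v = \lambda_t/\lambda_v - 1$ for $v\in V^+$ are a constant number of arithmetic operations per edge/vertex (and $\lambda_v > 0$ always, since $\lambda_s = 1$ and $\lambda_s \le \lambda_v$ by \Cref{lem:lam}, so no division by zero). The only point needing a word is the labeling on $V\setminus V^+$: rather than enumerating all simple $v$-$t$ paths (which is exponential), observe that $\min_{p\in P_{vt}}\mu_{u_p}$ is simply the minimum of $\mu_u$ over all $u\in V^+$ reachable from $v$ by a path whose interior avoids $V^+$ — equivalently, a shortest-path / reachability computation in the graph with all of $V^+$ made into ``absorbing'' targets, which is polynomial. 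I would note that this alternative but equivalent description of the $V\setminus V^+$ labels still satisfies \eqref{eq:ttf2} by the same argument as in the lemma's proof, so the resulting $(x,\mu)$ is genuinely a sink thin flow.

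I do not expect a real obstacle here; the corollary is a packaging statement. The only thing to be slightly careful about is making the $V\setminus V^+$ construction visibly polynomial, since the proof of \Cref{lem:source_sink_thin_flow} phrased it via simple paths for conceptual clarity; I would include the one-sentence observation above that it collapses to a reachability computation. Everything else is a direct citation of \Cref{thm:existence_thin_flows} and \Cref{lem:source_sink_thin_flow}, so the proof is essentially ``combine the theorem and the lemma, and note the conversion is cheap.''

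\begin{proof}
By \Cref{thm:existence_thin_flows}, a source thin flow $(y,\lambda)$ exists and can be computed in polynomial time. Applying \Cref{lem:source_sink_thin_flow} to $(y,\lambda)$ yields a sink thin flow $(x,\mu)$, which proves existence. For the running time, note that $(x,\mu)$ is obtained from $(y,\lambda)$ by the explicit formulas in the proof of \Cref{lem:source_sink_thin_flow}: for $e = vw \in E$ we set $x_e = y_e/\lambda_v$, and for $v \in V^+$ we set $\mu_v = \lambda_t/\lambda_v - 1$. These are $O(1)$ arithmetic operations per edge and per vertex, and they are well defined since $\lambda_v \ge \lambda_s = 1 > 0$ for all $v \in V$ by \eqref{eq:stf1} and \Cref{lem:lam}. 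For $v \in V \setminus V^+$, the label $\mu_v = \min_{p \in P_{vt}} \mu_{u_p}$ used in the proof equals $\min\{\mu_u \mid u \in V^+ \text{ reachable from } v \text{ by a path whose interior vertices lie in } V \setminus V^+\}$, which can be determined by a single reachability computation in polynomial time; the same argument as in the proof of \Cref{lem:source_sink_thin_flow} shows \eqref{eq:ttf2} still holds for these vertices. Hence a sink thin flow can be computed in polynomial time.
\end{proof}
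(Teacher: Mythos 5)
Your proposal is correct and follows exactly the route the paper intends: the paper gives no explicit proof, merely noting that \Cref{thm:existence_thin_flows} together with \Cref{lem:source_sink_thin_flow} (and the explicit construction in its proof) imply the corollary, which is precisely what you spell out. Your extra sentence on computing the $V\setminus V^+$ labels via a reachability/shortest-path computation rather than enumerating simple paths is a sensible clarification of a point the paper leaves implicit; one could equivalently note that for $v\in V\setminus V^+$ the constraint \eqref{eq:ttf2} reduces to $\mu_v=\min_{e=vw\in E}\mu_w$, a Bellman-Ford-type recurrence solvable in polynomial time.
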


Moreover, \Cref{lem:source_sink_thin_flow} shows a clear one-to-one-correspondence between the labels of flow-carrying vertices in source and sink thin flows. 

\begin{example}

    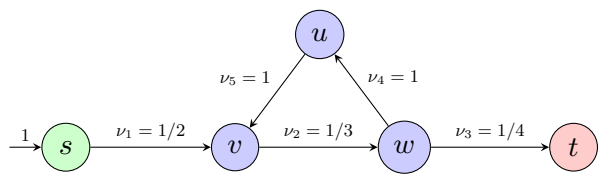
\begin{figure}[h]
\centering
\begin{tikzpicture}[scale=0.75]
\node[draw,circle,scale=1.25,fill=green!20] (1) at (0,0) {$s$};
\node[draw,circle,scale=1.2,fill=blue!20] (2) at (3,0) {$v$};
\node[draw,circle,scale=1.2,fill=blue!20] (3) at (6,0) {$w$};
\node[draw,circle,scale=1.2,fill=red!20] (4) at (9,0) {$t$};
\node[draw,circle,scale=1.2,fill=blue!20] (5) at (4.5,2) {$u$};
\draw[-stealth] (-1,0) to node[above,scale=0.75]{$1$} (1);
\draw[-stealth] (1) to node[above,scale=0.75]{$\nu_1=1/2$} (2);
\draw[-stealth] (2) to node[above,scale=0.75]{$\nu_2=1/3$}  (3);
\draw[-stealth] (3) to node[above,scale=0.75]{$\nu_3=1/4$}  (4);
\draw[-stealth] (3) to node[above right,scale=0.75]{$\nu_4=1$} (5);
\draw[-stealth] (5) to node[above left,scale=0.75]{$\nu_5=1$} (2);
\end{tikzpicture}
\caption{Source and sink thin flows. Transit times and tolls are not given as they do not affect the thin flows.}\label{fig:tf}
\end{figure}
Consider the instance given by Figure \ref{fig:tf}, in particular $E=\{1,2,3,4,5\}$. Consider the pair $(y,\lambda)$ with $y_1=y_2=y_3=1$, $y_4=y_5=0$, $\lambda_s=1$, $\lambda_v=2$, $\lambda_w=\lambda_u=3$ and $\lambda_t=4$. Then $(y,\lambda)$ is a source thin flow.
Similarly, the pair $(x,\mu)$ with $x_1=1$, $x_2=1/2$, $x_3=1/3$, $x_4=x_5=0$, $\mu_t=0$, $\mu_w=1/3$, $\mu_v=\mu_u=1$ and $\mu_s=3$ is a sink thin flow. 

We make the following two observations: (1) $\mu_u\neq \lambda_t/\lambda_u-1$. This fits our definitions, as $u$ is not a flow-carrying vertex. (2) $\mu$ represents the derivative of the cost of a particle for reaching the sink. While this second observation is not true for general flows over time, we show in Section \ref{sec:steady_state} that it is true for flow-carrying vertices in the flows that we define.
\end{example}

\section{Properties of Nash Flows over Time with Tolls}
\label{sec:examples}

We provide two examples to highlight the difference between dynamic equilibria with and without tolls. For the toll-free model, it is known that dynamic equilibria are unique with respect to travel times \cite{CCL15,OSK22} and reach a steady state at a certain point in time \cite{CCO22,OSK22}. We first show that dynamic equilibria with tolls need not be unique and then show that dynamic equilibria with tolls need not reach a steady state.

\begin{figure}[h]
\centering
\begin{minipage}[b]{0.48\textwidth}
\centering
\begin{tikzpicture}[scale=0.75]
\node[draw,circle,scale=1.25,fill=green!20] (1) at (0,0) {$s$};
\node[draw,circle,scale=1.2,fill=blue!20] (2) at (3,0) {$v$};
\node[draw,circle,scale=1.2,fill=red!20] (3) at (6,0) {$t$};
\draw[-stealth] (-1,0) to node[above,scale=0.75]{$2$} (1);
\draw[-stealth,bend left=30] (1) to node[above,scale=0.75]{$\tau_1=0, \nu_1=2,c_1=2$} (2);
\draw[-stealth,bend right=30] (1) to node[below,scale=0.75]{$\tau_2=1, \nu_2=2,c_2=0$} (2);
\draw[-stealth] (2) to node[above,scale=0.75]{$\tau_3=0, \nu_3=1$} node[below,scale=0.75]{$c_3=0$} (3);
\end{tikzpicture}
\caption{Multiple dynamic equilibria with different cost functions.}
\label{fig:mul}
\end{minipage}
\hfill
\begin{minipage}[b]{0.48\textwidth}
\centering
\begin{tikzpicture}[scale=0.75]
\node[draw,circle,scale=1.25,fill=green!20] (1) at (0,0) {$s$};
\node[draw,circle,scale=1.2,fill=blue!20] (2) at (3,0) {$v$};
\node[draw,circle,scale=1.2,fill=red!20] (3) at (6,0) {$t$};
\draw[-stealth] (-1,0) to node[above,scale=0.75]{$2$} (1);
\draw[-stealth,bend left=30] (1) to node[above,scale=0.75]{$\tau_1=0, \nu_1=1,c_1=1$} (2);
\draw[-stealth,bend right=30] (1) to node[below,scale=0.75]{$\tau_2=1, \nu_2=1,c_2=0$} (2);
\draw[-stealth] (2) to node[above,scale=0.75]{$\tau_3=0, \nu_3=1$} node[below,scale=0.75]{$c_3=0$} (3);
\end{tikzpicture}
\caption{A Nash flow over time that never reaches steady state.}
\label{fig:ss}
\end{minipage}
\end{figure}

For Nash flows over time, we cannot hope to show uniqueness of the actual flow in the network, even without tolls. To see this consider two parallel edges both with capacity equal to the inflow. Here, flow can split arbitrary between both edges. However, the transit time of each particle is the same in every equilibrium. In our model this translates to the hope that the cost functions of the particles could be the same across equilibria. However, the following theorem shows this is not case.

\begin{theorem}
\label{thm:non_uniqueness}
    A Nash flow over time is not unique with respect to the cost functions.
\end{theorem}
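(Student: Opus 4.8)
The plan is to exhibit two distinct dynamic equilibria with tolls on the instance in \Cref{fig:mul} whose cost functions $c_v(\theta)$ differ, at least on some nonempty time interval. The network has source $s$, intermediate vertex $v$, sink $t$, two parallel $s$-$v$ edges (edge $1$ with $\tau_1=0$, $\nu_1=2$, $c_1=2$; edge $2$ with $\tau_2=1$, $\nu_2=2$, $c_2=0$), a single $v$-$t$ edge (edge $3$ with $\tau_3=0$, $\nu_3=1$, $c_3=0$), and inflow rate $u=2$ at $s$. The key feature is that the two $s$-$v$ edges have equal cost-to-traverse-in-isolation ($\tau_1+c_1 = 2 = \tau_2+c_2$) but differ in how much of that cost is transit time versus toll; since edge $3$ has capacity $1 < 2 = u$, a queue necessarily builds on edge $3$, and the rate at which a particle's downstream waiting grows depends only on arrival time at $v$, not on toll paid — so routing particles through the slower edge $2$ actually delays their entry into the edge-$3$ queue. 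This asymmetry is what breaks uniqueness.

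First I would describe the ``all flow through edge $1$'' equilibrium: send the full inflow $u=2$ along edge $1$ (no queue on edge $1$ since $\nu_1=2$), so particles arrive at $v$ at the same time they left $s$, and then queue on edge $3$ at rate $u-\nu_3 = 1$. Compute $c_v(\theta)$: a particle at $v$ at time $\theta$ pays $\tau_3 + q_3(\theta) + c_3 + c_t(\cdot) = q_3(\theta)$, and $c_s(\theta) = \tau_1 + c_1 + c_v(\theta) = 2 + q_3(\theta)$; one must check edge $2$ is not cheaper, i.e. $\tau_2 + q_2 + c_2 + c_v(\theta+\tau_2) \ge c_s(\theta)$, which reads $1 + c_v(\theta+1) \ge 2 + c_v(\theta)$, i.e. $q_3(\theta+1) \ge 1 + q_3(\theta)$ — true with equality since $q_3$ grows at rate $1$. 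So edge $2$ is also active, but carrying zero flow is consistent with the DE definition (the condition is only $f_e^+>0 \Rightarrow e \in E_\theta$). Then I would describe a second equilibrium in which a positive fraction of flow is routed along edge $2$: since both $s$-$v$ edges are active in the first equilibrium, we can shift, say, flow rate $1$ onto edge $2$ while keeping rate $1$ on edge $1$; this flow still has everyone entering the edge-$3$ queue, but the particles via edge $2$ enter it one time unit later. One must recompute $q_3$, verify FIFO and flow conservation hold, verify both $s$-$v$ edges remain active (the equal-cost structure should make this robust), and — crucially — verify that $c_v$ and/or $c_s$ now take different values than in the first equilibrium on some interval, because the queue profile $z_3(\cdot)$ on the shared bottleneck edge has changed (e.g., near $\theta = 0$ the queue on edge $3$ fed only by the edge-$1$ stream of rate $1 \le \nu_3$ does not build, so $c_v$ is momentarily $0$ rather than growing immediately).

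The main obstacle I anticipate is the second, more delicate half: one must actually produce a fully consistent second equilibrium and check all the defining conditions simultaneously — flow conservation at $v$, the at-capacity/FIFO rule on edge $3$, and the equilibrium (active-edge) condition at every time, all while the queue on edge $3$ evolves in a piecewise-linear but genuinely different way from the first equilibrium. In particular, because a fraction of flow now reaches $v$ with a one-unit delay, the inflow into edge $3$ is no longer a constant $2$ but a step function (e.g. rate $1$ on $[0,1)$, then $2$ afterward), so $q_3$ starts at $0$, stays $0$ on $[0,1)$, then grows — and I would need to chase the induced $c_v(\theta) = q_3(\theta)$ and $c_s(\theta) = \min\{2 + q_3(\theta),\ 1 + q_3(\theta+1)\}$ back through time to confirm (i) edge $1$ stays active wherever it carries flow and (ii) edge $2$ stays active wherever it carries flow, which amounts to checking the inequality $q_3(\theta+1) \ge 1 + q_3(\theta)$ with equality on the relevant range. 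Once that bookkeeping is done, comparing $c_s(\theta)$ (or $c_v(\theta)$) between the two equilibria on, say, an initial interval where one has a queue on edge $3$ and the other does not yields the claimed non-uniqueness of cost functions.
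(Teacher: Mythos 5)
Your instance and your first equilibrium (all flow on edge $1$) match the paper, and your intuition about why non-uniqueness arises (paying the higher toll is worth it only if it avoids queueing on edge $3$) is exactly right. But the concrete second equilibrium you propose --- the constant $1/1$ split with rate $1$ on each of edges $1$ and $2$ --- is \emph{not} a dynamic equilibrium, and the check you defer to the end would in fact fail. With this split, the inflow to edge $3$ is $1$ on $[0,1)$ and $2$ thereafter, so $q_3(\theta)=0$ for $\theta<1$ and $q_3(\theta)=\theta-1$ for $\theta\geq 1$. Hence for $\theta<1$,
\[
c_s(\theta)=\min\{\,2+q_3(\theta),\ 1+q_3(\theta+1)\,\}=\min\{2,\ 1+\theta\}=1+\theta<2,
\]
so edge $1$ is strictly \emph{in}active on $[0,1)$ yet carries flow rate $1$ there. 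Equivalently, your test inequality $q_3(\theta+1)\geq 1+q_3(\theta)$ reads $\theta\geq 1$ on $[0,1)$, which is false. The equal total-cost ``robustness'' you appeal to only holds when the edge-$3$ queue is growing at rate $1$; once you reduce early inflow to $1\leq\nu_3$, the queue stops growing on $[0,1)$, the one-unit delay from edge $2$ no longer buys any queueing reduction, and the toll gap $c_1-c_2=2$ is uncompensated. (The same argument kills any constant split with positive rate on edge $1$.)

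The repair is to send \emph{all} flow on edge $2$, which is what the paper does: $g_1^+\equiv 0$, $g_2^+\equiv 2$, giving the same $q_3$ profile as above but with only the (active) edge $2$ carrying flow on $[0,1)$, and both edges tied from $\theta=1$ on, with $c_s(\theta)=1+\theta$ throughout. Comparing to your first equilibrium's $c_s(\theta)=2+\theta$ then gives the cost discrepancy. So your approach is right in spirit; you just need to push the flow shift all the way to the extreme point of the feasible set rather than stopping at a split, precisely because any positive flow on edge $1$ is inconsistent with the zero queue that the shift itself creates on the early interval.
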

\begin{proof}
Consider the instance depicted in Figure~\ref{fig:mul} with inflow rate $u=2$.

We now provide two different equilibria for this instance. 
First, consider the flow over time $f$ where all particles travel along the upper path. Formally, we define $f_1^+(\theta)=f_3^+(\theta)=2$ and $f_2^+(\theta)=0$ for all $\theta\geq 0$.
Then the corresponding cost functions are $c_v(\theta)=q_3(\theta)=\theta$ and $c_s(\theta)=\min\{c_1+c_v(\theta),\tau_2+c_v(\theta+\tau_2)\}=\min\{2+\theta,1+\theta +1\}=2+\theta$.
As both $sv$ edges attain the minimum in the definition of $c_2$ for every point in time all edges in this example are always active and thus $f$ is a dynamic equilibrium.

Consider the flow over time $g$ which sends all the flow along the bottom path. Formally, we define $g_1^+(\theta)=0$, $g_2^+(\theta)=2$, and \[g_3^+(\theta)=\begin{cases}
    0&\text{ if }\theta<1,\\
    2&\text{ if }\theta\geq 1,
\end{cases}\]Then  \[c_v(\theta) = q_3(\theta)=\begin{cases}
    0 \text{ for all }\theta<1,\\
    \theta-1\text{ for all }\theta\geq1,
\end{cases}\] and \[c_s(\theta)=\min\{c_1+c_v(\theta),\tau_2+c_v(\theta+\tau_2)\}=\begin{cases}
    \min\{2,1+1+\theta-1\}=1+\theta\text{ for all }\theta<1,\\
    \min\{2+\theta-1,1+1+\theta-1\}=1+\theta\text{ for all }\theta\geq1.
\end{cases}\]

Hence for $\theta<1$ only the bottom edge between $s$ and $v$ is active and then from time 1 on both edges are active. This yields, that $g$ is also a dynamic equilibrium. Observe that $f$ has different induced costs than $g$.

This example illustrates the difference between tolls and travel times. Users are willing to pay a higher toll than travel time so as to skip waiting time in a queue later in the network. 
\end{proof}

 In the proof we give two different Nash flows over time. However, we can show that every constant flow over time different from the two above (for example, equal split over the two paths) is not a dynamic equilibrium. Thus, the set of dynamic equilibria is not convex.


The next theorem provides an instance in which the dynamic equilibrium consists of infinitely many phases. It converges towards a steady state, but never reaches it.

\begin{theorem}
    \label{thm:no_steady_states_reached}
A Nash flow over time with tolls does not always reach a steady state and can have infinitely many phases.
\end{theorem}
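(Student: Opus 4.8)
The plan is to analyze the instance in Figure~\ref{fig:ss} explicitly: inflow rate $u=2$, two parallel $s$-$v$ edges with $(\tau_1,\nu_1,c_1)=(0,1,1)$ (fast but tolled) and $(\tau_2,\nu_2,c_2)=(1,1,0)$ (slow but free), and a single bottleneck edge $vw=3$ with $\tau_3=0$, $\nu_3=1$, $c_3=0$ (here $w=t$). Since $u=2$ exceeds the $s$-$v$ cut of capacity $2$ only marginally and the $v$-$t$ cut has capacity $1 < 2$, a queue must build on edge $3$ forever, so $c_v(\theta)=q_3(\theta)\to\infty$. The key point will be that the split between edges $1$ and $2$ is governed by the condition that both $s$-$v$ edges be active whenever both carry flow, i.e. $c_1 + c_v(\theta) = \tau_2 + c_v(\theta+\tau_2)$, which reads $1 + q_3(\theta) = 1 + q_3(\theta+1)$, forcing $q_3(\theta+1)=q_3(\theta)$ — a contradiction with $q_3$ increasing. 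Hence the two edges cannot both be active on any nondegenerate interval, so in equilibrium the flow alternates: during some intervals all flow uses the cheap-toll fast edge (queue on $3$ grows at rate $1$), during others all flow uses the free slow edge (its arrivals at $v$ are delayed by $1$, so what reaches edge $3$ during such a window was injected one unit earlier).

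First I would set up the phase structure. Start at $\theta=0$ with $q_3(0)=0$. On edge $3$ the outflow is capped at $\nu_3=1$ while inflow (whatever reaches $v$) is either $0$ or $2$, so $q_3$ is piecewise linear with slopes in $\{-1,0,+1\}$. I would show that the equilibrium is forced to be: a phase where edge $1$ is uniquely active (so $c_1 + q_3(\theta) < \tau_2 + q_3(\theta+1)$, i.e. $q_3(\theta+1) > q_3(\theta)$, automatically true while $q_3$ has just been increasing), during which $2$ units per time arrive at $v$, $q_3$ grows at rate $1$; this continues until the inequality would flip. Then edge $1$ becomes inactive and all flow diverts to edge $2$; but flow on edge $2$ takes time $1$ to traverse, so for one time unit $v$ receives nothing new on edge $2$ (only the already-in-transit flow), the queue on $3$ drains at rate $1$, and thereafter $v$ receives $2$ again. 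The comparison $c_1 + q_3(\theta)$ versus $\tau_2 + q_3(\theta+1) = 1 + q_3(\theta+1)$, with $c_1 = 1$, simplifies to comparing $q_3(\theta)$ with $q_3(\theta+1)$, so the active edge flips exactly when $q_3(\theta+1)$ crosses $q_3(\theta)$ from one side to the other. Tracking the successive peak and trough values of $q_3$ over these phases should yield a recursion on the phase lengths (or on the queue extrema) that is strictly monotone and bounded — e.g. the peaks strictly increase but converge to a finite limit $q^*$ at which the two routing options tie permanently — so there are infinitely many phases and the limiting configuration (constant queue $q^*$, flow split evenly) is the unreached steady state.

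The main steps, in order: (i) prove the "not both active" lemma above, reducing to the identity $q_3(\theta+1)=q_3(\theta)$ being impossible for an increasing queue; (ii) from this, derive that any equilibrium flow is, up to the usual freedom on saturated edges, the alternating single-path flow, and write down $q_3$ as an explicit piecewise-linear function determined by the switching times; (iii) solve the recursion for the switching times / queue extrema and show it has infinitely many terms converging to a finite limit, so no phase lasts forever yet $q_3'$ is never eventually constant — in particular the equilibrium is not in steady state even in the weak sense of Definition~\ref{...} (I will cite the weak-steady-state notion already introduced), while it does converge; (iv) briefly confirm the constructed flow is a genuine dynamic equilibrium with tolls by checking $f_e^+(\theta)>0 \Rightarrow e \in E_\theta$ on each phase.

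The hard part will be step (iii): getting the recursion for the switching epochs right and proving it is monotone, bounded, and non-terminating. The delay of $1$ on edge $2$ couples each phase to the configuration one time unit earlier, so the recursion is not a simple one-step map on a scalar; I expect to need to carry both the current queue value and the "pending inflow" profile over a sliding window of length $1$, and to argue convergence by a monotonicity/compression argument (the gap between consecutive peaks shrinking geometrically, say by a factor related to the ratio of phase slopes). Once that recursion is pinned down, the conclusion — infinitely many phases, convergence without attainment — is immediate.
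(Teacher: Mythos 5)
Your step (i)—the ``not both active'' lemma—is false, and everything downstream inherits the error. You derive $c_1+c_v(\theta)=\tau_2+c_v(\theta+\tau_2)$ as the active-edge condition, but that equality only reads $1+q_3(\theta)=1+q_3(\theta+1)$ if you additionally assume $q_1(\theta)=q_2(\theta)=0$. With $\nu_1=\nu_2=1$ and $u=2$, you cannot make that assumption: the moment one $s$-$v$ edge carries more than rate $1$, a queue forms on \emph{that} edge. The general condition for both $s$-$v$ edges to be active is
\[
q_1(\theta)+c_1+q_3\bigl(\theta+q_1(\theta)\bigr)\;=\;\tau_2+q_2(\theta)+q_3\bigl(\theta+\tau_2+q_2(\theta)\bigr),
\]
and with a nonzero $q_1$ this is perfectly compatible with a strictly growing $q_3$ (to first order, if $q_3$ grows at rate $r$, it forces $q_1\approx r/(1+r)$). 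Your dynamic picture also fails on its own terms: if all $2$ units enter edge $1$ (capacity $1$), a queue grows on edge $1$ and only $1$ unit per time reaches $v$, so $q_3$ stays \emph{flat}, not growing at rate $1$ as you claim. Consequently the alternating single-path flow you propose is not what the equilibrium looks like and would not yield the claimed recursion.

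The paper's proof instead exhibits a flow in which both $s$-$v$ edges carry flow throughout, with a nontrivial split $f_1^+(\theta)>1>f_2^+(\theta)$ that is piecewise constant on phases $[\theta_i,\theta_{i+1})$ and converges to $(1,1)$ as $i\to\infty$; a queue builds on edge $1$ (converging to $1/2$) precisely so that the two paths tie in cost at every time, which is verified by an explicit identity $\theta_i+q_1(\theta_i)=1+\theta_{i-1}$. So the mechanism is not ``alternation between two exclusive routes'' but ``a simultaneous split whose imbalance shrinks geometrically without vanishing.'' If you want to salvage your route, the first fix is to re-derive the equal-cost condition including $q_1$; you will find the split, not the alternation, and then your step (iii) recursion becomes essentially the one-step map the paper writes down explicitly.
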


\begin{proof}
We consider the instance depicted in Figure~\ref{fig:ss} with inflow rate $u=2$. We remark that the result is more general and we found that the instance does not reach a steady state for all tolls $c_1$ with $0<c_1<8/5$. 

We define a Nash flows over time with tolls that does not reach a steady state. For all $i\in\mathbb{N}\cup\{0\}$, define $\theta_i\coloneqq\frac{(2i+1)\cdot 3^i-1}{4\cdot 3^i}$. 
We define the piece-wise constant flow $f$ as follows. For all $i\in \mathbb{N}\cup\{0\}$ and for all $\theta\in [\theta_i, \theta_{i+1})$, we have
\begin{align*}
    f^+_1(\theta)&=\frac{3^{i+1}+3}{3^{i+1}+1},\\
    f^+_2(\theta)&=f^+_3(1+\theta) - 1=\frac{3^{i+1}-1}{3^{i+1}+1},
\end{align*}
and $f^+_3(\theta)=1$ for all $\theta\in[0,1)$.

Note that $f_1(\theta)\geq 1$ for all $\theta\geq0$ and $\theta_i$ is monotonously increasing, and goes to infinity as $i$ goes to infinity. One can check that flow-conservation is fulfilled.
Thus, the defined flow is indeed a feasible flow over time. The induced queues at time $\theta_i$ can be computed as follows for all $i\in\mathbb{N}\cup\{0\}$,
\begin{align*}
    q_1(\theta_i)&=\sum_{j=0}^{i-1}(\theta_{j+1}-\theta_j)\cdot (f^+_1(\theta_j)-1)=\sum_{j=0}^{i-1} \frac{3^{j+1}+1}{2\cdot 3^{j+1}}\cdot\frac{2}{3^{j+1} + 1}=\sum_{j=0}^{i-1} \frac{1}{3^{j+1}}=\frac{3^i-1}{2\cdot 3^i},\\
q_3(1+\theta_i)&=\sum_{j=0}^{i-1}(\theta_{j+1}-\theta_j)\cdot (f^+_3(1+\theta_j)-1)=\sum_{j=0}^{i-1} \frac{3^{j+1}+1}{2\cdot 3^{j+1}}\cdot\frac{3^{j+1}-1}{3^{j+1} + 1}=\frac{(2i-1)\cdot 3^i+1}{4\cdot 3^i}.
\end{align*}
We are now ready to prove that $f$ is a dynamic equilibrium.

Observe that $c_v(\theta)=q_3(\theta)$. We show that $c_s(\theta)=q_1(\theta)+c_1+c_v(\theta+q_1(\theta))=\tau_2+c_v(\theta+\tau_2)$ for all $\theta=\theta_i$ and for all $i\in\mathbb{N}\cup\{0\}$, which implies that both paths have equal costs to reach $t$. The result then follows since flows are piecewise constant and thus costs are piecewise linear.

Notice that for all $i\in \mathbb{N}$, \begin{align*}
\theta_i+q_1(\theta_i)&=\frac{(2i+1)\cdot 3^i-1}{4\cdot 3^i}+\frac{3^i-1}{2\cdot 3^i}=\frac{(2i+3)\cdot 3^i -3}{4\cdot 3^i}\\
&=\frac{4\cdot 3^{i-1} +(2(i-1)+1)\cdot 3^{i-1} -1}{4\cdot 3^{i-1}}=1+\frac{(2(i-1)+1)\cdot 3^{i-1}-1}{4\cdot 3^{i-1}}=1+\theta_{i-1}.\end{align*}

This implies that a particle departing at time $\theta_i$ using edge $e_1$ arrives together with a particle departing at time $\theta_{i-1}$ using edge $e_2$ and thus face the same queue on edge $e_3$. This implies that 
\begin{align*}
c_1+q_1(\theta_i)+c_v(\theta_i+q_1(\theta_i))&=c_1+q_1(\theta_i)+q_3(\theta_i+q_1(\theta_i))\\
&=1+q_1(\theta_i)+q_3(1+\theta_{i-1})\\
&=1+\frac{3^i-1}{2\cdot 3^i}+\frac{(2(i-1)-1)\cdot 3^{i-1}+1}{4\cdot 3^{i-1}}\\
&=1+\frac{(2i-1)\cdot 3^i+1}{4\cdot 3^i}\\
&=1+q_3(1+\theta_i) \\
&= \tau_2+c_v(1+\theta_i)
\end{align*}

Thus, $c_s(\theta)$ is defined as claimed. In particular, this implies that both $sv$ edges remain active throughout the evolution of the flow over time. This shows that $f$ is indeed an equilibrium flow.
\end{proof}

 \begin{figure}[h]
  \centering
  \includegraphics[width=\linewidth]{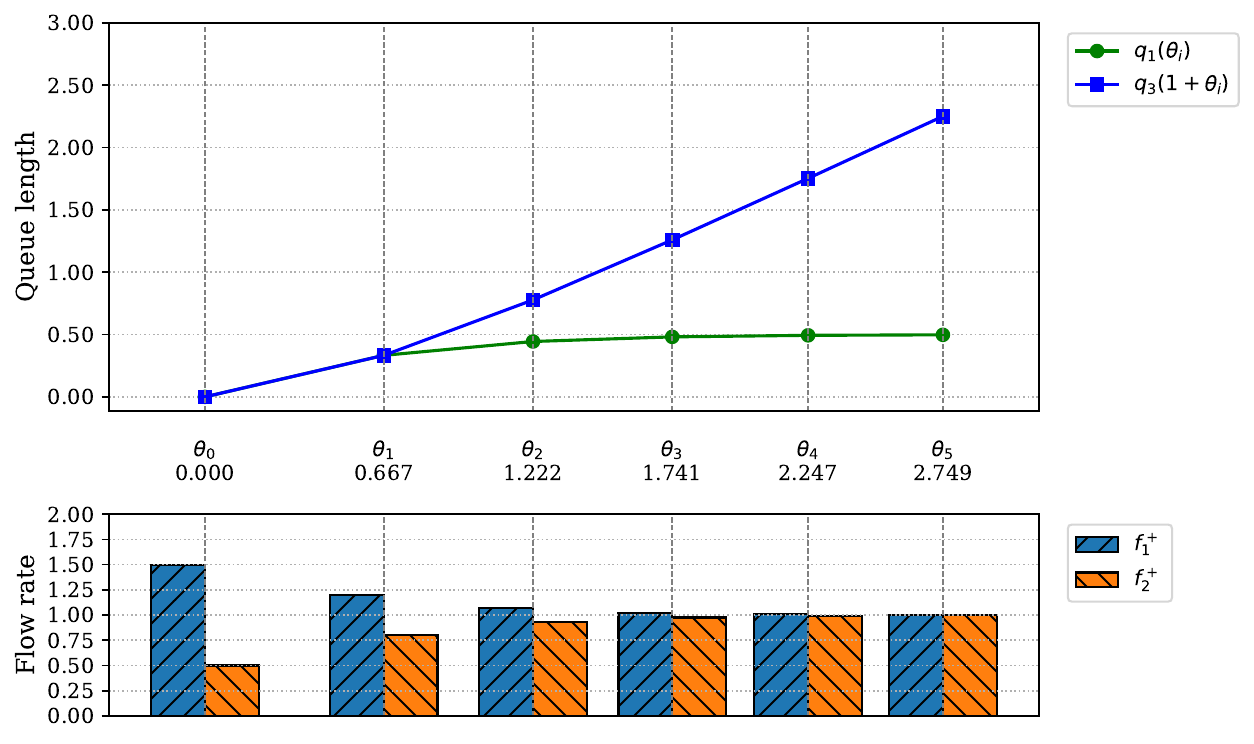}
  \caption{Flow division and queue lengths for different $\theta_i$ values for the example in \Cref{thm:no_steady_states_reached}. It can be observed that queue length as well as flow rates converge rather quickly, while the explicit formulas show that the steady state is never actually reached.}
  \label{fig:flow-plot}
\end{figure}

\section{Computing the Steady State}
\label{sec:steady_state}
In the previous section, we have seen an instance where an equilibrium flow starting in the initially empty network never reaches a steady state (\Cref{thm:no_steady_states_reached}). Recall that in a steady state, from some point in time onward all queues change with a constant rate, i.e., linearly.
 In this section, we describe how to compute steady states using a linear program, and how to implement them by setting initial queues, for a variant of flow over time which we denote as a {\em deficit flow}. 
 The deficit flow differs from a standard flow over time, in that every vertex is treated as a source until some time $T$, but from time $T$ (which for our purpose can be set equal to the maximum transit time of an edge) onward we will show that it is a feasible flow over time, which only uses active edges, and sends a constant amount of flow along all these edges. Note that this usage of deficit flows is necessary, in the sense that by \Cref{thm:no_steady_states_reached} a steady state may not be reached from an initial network state.

For this section, we consider a fixed instance of a tolled flow over time, which is given by $(G=(V,E), u,  (\tau_e)_{e \in E}, (\nu_e)_{e \in E}, (c_e)_{e \in E})$. 
A \emph{deficit flow until time $T$ with initial queues $q$} is a set of edge inflow and outflow rates $f=(f_e^+,f_e^-)_{e\in E}$, together with a vector of initial queues $q \in \mathbb{R}^E_{\geq 0}$ such that the edge inflow and outflow rates satisfy the following constraints. 
First, 
\[f_e^-(\theta+\tau_e)=\begin{cases}
\nu_e &\text{ if }z_e(\theta)>0 \; ,\\
\min\{f_e^+(\theta),\nu_e\} &\text{ if }z_e(\theta)=0 \;,
\end{cases}\]
for all $e\in E$ and all $\theta\geq0$. Here, the definition of the queue mass $z$ has to be adapted slightly to take the starting queue into account, namely $z_e(\theta) = F_e^+(\theta) - F_e^-(\theta+\tau_e) + \nu_e q_e$.
Second, for all $v\in V$ and $\theta\geq0$
\[\sum_{e\in\delta^+(v)}f_e^+(\theta)-\sum_{e \in \delta^-(v)}f_e^-(\theta)\begin{cases}
= u &\text{ if }v=s \; ,\\
\leq 0 & \text{ if } v=t \; ,\\
\geq 0 &\text{ if }v\neq s,t \; ,\\
= 0 &\text{ if }v\neq s,t \; \text{ and } \theta \geq T.
\end{cases}\]

We will use this definition, to allow for an initial phase in which the network fills up, before the flow has to correspond to a proper flow over time. 
To see that the definition of a deficit flow with initial queues is quite natural when we aim to describe steady state flows, we consider a simple example.
\begin{example}
    Consider a network with vertices $V=\{s,v,t\}$ and $e_1=sv$ and $e_2=e_3=vt$, as depicted in \Cref{fig:deficit_flow}. Let $\tau_{e_2}=2$ and $\tau_{e_1}=\tau_{e_3}=1$, $u=\nu_{e_1}=2$ and $\nu_{e_2}=\nu_{e_3}=1$. All tolls are set to 0. 
    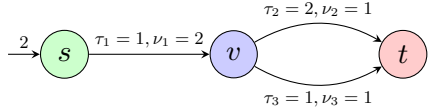
\begin{figure}[h]
\centering
\begin{tikzpicture}[scale=0.75]
\node[draw,circle,scale=1.25,fill=green!20] (1) at (0,0) {$s$};
\node[draw,circle,scale=1.2,fill=blue!20] (2) at (3,0) {$v$};
\node[draw,circle,scale=1.2,fill=red!20] (3) at (6,0) {$t$};
\draw[-stealth] (-1,0) to node[above,scale=0.75]{$2$} (1);
\draw[-stealth,bend left=30] (2) to node[above,scale=0.75]{$\tau_2=2, \nu_2=1$} (3);
\draw[-stealth,bend right=30] (2) to node[below,scale=0.75]{$\tau_3=1, \nu_3=1$} (3);
\draw[-stealth] (1) to node[above,scale=0.75]{$\tau_1=1, \nu_1=2$} node[below,scale=0.75]{ } (2);
\end{tikzpicture}
\caption{All tolls in the above example are 0. With a deficit flow steady states can be described in a simple way.}\label{fig:deficit_flow}
\end{figure}
    It is easy to see that in a steady state there should be a queue of length $1$ on edge $e_3$. Once such a queue forms, the flow should split half-half between the top path and the bottom path. However, if we initially place a queue of length $1$ on $e_3$ and start to send flow into the network with rate $2$, the queue will start to deplete immediately. The deficit flow allows us to send flow originating at $v$, in order to keep the queue constant until the flow from the source reaches the end of $e_1$. This allows for a more natural assignment of queues which does not try to anticipate the depletion of the queues that happens until the network is filled up. In this sense a deficit flow is quite natural. Here, a deficit flow until time $1$ with initial queues $q_{e_3}=1$ and $q_{e_1}=q_{e_2}=0$ is given as $f^+_{e_1}(\theta)=2 =f_{e_1}^-(\theta+1)$, 
    $f_{e_2}^+(\theta)=f_{e_3}^+(\theta)=1=f_{e_2}^-(\theta+2)=f_{e_3}^-(\theta+1)$ for all $\theta$. Moreover, $f_{e_i}^-(\theta) = 0$ for $\theta \leq \tau_{e_i}$ for $i \in [3]$.
\end{example}
We extend the definitions of dynamic equilibrium and steady state to deficit flows in the natural way.

In this section, we use deficit flows in order to define a steady state for a flow over time with tolls. Formally, we prove that there is a deficit flow until time $\taumax\coloneqq\max_{e\in E} \tau_e$ with a set of initial queues such that $q'_e(\theta)$ is constant for every $e \in E$ and $\theta \geq 0$.
Formally:
\begin{theorem}
\label{thm:steady}
    There exists a deficit flow $f$ until time $\taumax$ with initial queues $(q_e)_{e \in E}$, such that
    \begin{enumerate}
        \item $f_e^+$ is constant for all $e\in E$,
        \item  $f$ is in steady state from time $0$ onward,
        \item $f$ is a dynamic equilibrium.
    \end{enumerate}
\end{theorem}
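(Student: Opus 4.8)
The plan is to build the flow from a sink thin flow $(x,\mu)$ whose existence (for the right inflow value) is guaranteed by the corollary following \Cref{lem:source_sink_thin_flow}, and then to tune the initial queues and the inflow rate so that the constant flow along $x$ is in equilibrium at all times. First I would argue that it suffices to pick a constant inflow rate $u$ and a sink thin flow $(x,\mu)$ for that value $u$, and to set, for every edge $e=vw$, the constant edge inflow rate to the scaled value $x_e(\mu_v+1)/(\mu_s+1)$ from the flow conservation condition in the definition of sink thin flow; then item (1) holds by construction, and since all inflow rates are constant, queues evolve linearly, giving item (2) (this is exactly the remark made just before the Thin Flows subsection: a flow with constant inflow rates is automatically in steady state). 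Hence the entire content is item (3): showing this constant flow is a dynamic equilibrium once we choose the initial queues correctly, with the deficit-flow mechanism handling the transient before time $\taumax$.

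Next I would set up the cost labels. The idea is to \emph{prescribe} $c_v(\theta)$ to be affine in $\theta$: guided by the interpretation stated in the paper, I would aim for $c_v'(\theta)=\mu_v$ on flow-carrying vertices, i.e.\ $c_v(\theta)=\mu_v\theta + b_v$ for constants $b_v$ to be determined from the transit times and tolls. The defining recursion $c_v(\theta)=\min_{e=vw}\bigl(\tau_e+q_e(\theta)+c_e+c_w(\theta+\tau_e+q_e(\theta))\bigr)$ must then be satisfied with equality exactly on the support of $x$ and with $\ge$ on the other edges. Plugging in $c_w(\cdot)=\mu_w\cdot+b_w$ and $q_e(\theta)=q_e^{(0)}+q_e'\theta$ (with $q_e'$ determined by the constant inflow rate minus $\nu_e$, capped at $0$), the equality on an active edge becomes a linear identity in $\theta$; matching the $\theta$-coefficients should reproduce precisely the sink-thin-flow relations \eqref{eq:ttf2}–\eqref{eq:ttf3} — this is where the factor $(1+\mu_w)x_e/\nu_e-1$ comes from, since $q_e'$ on a queueing edge equals (scaled inflow)$/\nu_e - 1$ and the arrival-time argument $\theta+\tau_e+q_e(\theta)$ contracts/expands the slope by $(1+q_e')$. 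Matching the constant terms then gives a consistent system for the $b_v$ and, crucially, pins down the required \emph{initial queues} $q_e^{(0)}$: on active edges we need the constant term to line up, which forces $q_e^{(0)}\ge 0$ to a specific value (and \Cref{lem:lam}, transported via \Cref{lem:source_sink_thin_flow} to $\mu_t\le\mu_v$, i.e.\ $\lambda_v\le\lambda_t$, is what guarantees the slopes are ordered so that these queues are nonnegative and the min is attained along $x$). For edges with $x_e=0$ I would check the inequality $c_v(\theta)\le \tau_e+q_e(\theta)+c_e+c_w(\dots)$ holds for all $\theta\ge \taumax$, again using the thin-flow inequality \eqref{eq:ttf2} for the slope and choosing $q_e^{(0)}$ (here we may keep these queues at $0$) large enough in the constant term — or, more simply, observing that non-active edges carry no flow so only the inequality direction matters.

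Then I would handle the deficit-flow / fill-up phase: for $\theta<\taumax$ the prescribed constant inflow rates may violate ordinary flow conservation at intermediate vertices (a vertex may not yet have received inflow from upstream), so I treat every vertex as a source, supplying exactly the deficit $\sum_{e\in\delta^+(v)}f_e^+ - \sum_{e\in\delta^-(v)}f_e^-$, which is $\ge 0$ because the scaled $x$ is a genuine $s$-$t$-flow and outflow only lags inflow during the transient. By time $\taumax=\max_e\tau_e$ every edge's outflow has "switched on", all deficits vanish, and from then on $f$ is an honest flow over time. The initial queues $q_e^{(0)}$ chosen above are exactly what makes $z_e(\theta)=F_e^+(\theta)-F_e^-(\theta+\tau_e)+\nu_e q_e$ evolve with the intended linear profile $q_e^{(0)}+q_e'\theta$ from $\theta=0$ on (the deficit inflow keeps the queue from depleting, as illustrated in the $e_3$ example). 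Finally, with $c_v$ verified to satisfy its recursion with equality precisely on $\mathrm{supp}(x)=\mathrm{supp}(f^+)$ for all $\theta\ge 0$, the equilibrium condition "$f_e^+(\theta)>0\Rightarrow e\in E_\theta$" holds, giving item (3).

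The main obstacle I anticipate is the bookkeeping that simultaneously (a) forces the initial queues to be \emph{nonnegative} and (b) certifies that the chosen active edges really attain the minimum in $c_v$ for \emph{all} $\theta\ge0$, not just asymptotically — in the tolled setting the min in the $c_v$-recursion is over $\tau_e+q_e(\theta)+c_e+c_w(\theta+\tau_e+q_e(\theta))$, and the composition $c_w(\theta+\tau_e+q_e(\theta))$ makes the comparison genuinely nonlinear unless one is careful about which edges are queueing (slope $q_e'>0$) versus free-flowing (slope $0$). Getting the constant terms $b_v$ to be well-defined (independent of which active path one follows back from $t$) is the toll-specific analogue of path-consistency of the $\ell$-labels, and I expect this to rely on \eqref{eq:ttf3} holding along every flow-carrying path plus the one-to-one correspondence with source thin flows from \Cref{lem:source_sink_thin_flow}.
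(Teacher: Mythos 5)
Your proposal correctly identifies the structural skeleton the paper uses: start from a (sink) thin flow, set constant edge inflow rates proportional to it, observe queues evolve linearly, prescribe affine cost labels whose slopes come from the thin-flow coefficients, treat pre-$\taumax$ flow conservation by the deficit-flow mechanism, and then certify the equilibrium condition on the support of the flow. The slope bookkeeping you describe --- matching $\theta$-coefficients via the relations \eqref{eq:ttf2}--\eqref{eq:ttf3}, with the monotonicity of Lemma~\ref{lem:lam} ordering the slopes --- is indeed exactly what the paper's Lemma~\ref{lem:derivatives} does.

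However, your proposal leaves a genuine gap at what you yourself flag as ``the main obstacle'': you never actually produce the offsets $b_v$ and initial queues $q_e^{(0)}$, nor show that they can be chosen simultaneously nonnegative and consistent. You suggest these come from ``matching constant terms'' and that consistency follows from \eqref{eq:ttf3} along flow-carrying paths; but the thin-flow relations only involve capacities and say nothing about $\tau_e$ or $c_e$, which are precisely what enters the constant terms, so they cannot provide the offsets. The paper closes this gap with a piece of machinery your sketch does not contain: a primal-dual pair of LPs whose primal feasible region re-derives the thin flow (Lemma~\ref{lem:stf}) and whose dual variables $d^*_v$ and $p^*_e$ (after the normalizations of Lemma~\ref{lem:dual_struct}) supply exactly the offsets and the scaled initial queues $q_e = \frac{\lambda_w}{\lambda_t}p_e^*$. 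Nonnegativity of queues, complementary slackness forcing $p_e^*=0$ on non-saturated edges, and the dual constraint $d^*_v - d^*_w - p^*_e \le \frac{\lambda_t}{\lambda_w}\tau_e + c_e$ with equality on the support are what make the constant-term matching consistent (Lemma~\ref{lem:cost_zero_to_right}). Without an argument of this flavor --- some certificate that a nonnegative, path-consistent choice of queues and offsets exists --- the construction you propose is incomplete. (A smaller issue: you set the constant edge inflow to the scaled value $x_e(\mu_v+1)/(\mu_s+1)$; the paper instead uses $f_e^+ = y_e^*/\lambda_v = x_e$, i.e.\ the unscaled sink-thin-flow value, which is what makes the queue derivatives come out to $\lambda_w/\lambda_v - 1$.)
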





Our proof of Theorem \ref{thm:steady} is comprised of two steps. First, we introduce a primal-dual pair of linear problems based on a thin flow $(y, \lambda)$. We prove that there is an optimal solution to these linear programs that corresponds to a deficit flow, denoted as the {\em LP-induced flow}. Next, we prove that this LP induced flow is a dynamic equilibrium. 

\subsection{LP Representation}
\label{sec:LPs}
Let $(\bar y,\lambda)$ be a source thin flow for the given tolled flow over time instance. This thin flow exists by \Cref{thm:existence_thin_flows}. 
In this section, we introduce a primal-dual pair of linear programs based on $\lambda$ in order to define our steady state. We will then prove that there is an optimal solution to these linear programs that satisfies several useful properties.
The linear program as well as the notation we introduce in the following are strongly inspired by \cite{OSK22}, where a linear program was presented to characterize steady states in flows over time without tolls. This linear program was again inspired by the work of \cite{CCO22}. We will adapt the linear program to the tolled setting.


We first characterize the edges according to the value of $\lambda$:

\begin{center}
    $E^>:=\{vw \in E \mid \lambda_w > \lambda_v$\}\\
    $E^=:=\{vw \in E \mid \lambda_w = \lambda_v$\}\\
    $E^<:=\{vw \in E \mid \lambda_w < \lambda_v$\}\\
\end{center}

Note that this partitions the edge set, i.e. $E= E^> \dot\cup E^= \dot\cup E^<$.

Consider the following minimum flow LP. This is similar to the LP presented in \cite{OSK22} with two differences: \begin{enumerate}
    \item The primal objective function and thus the right hand side of the dual constraint is modified ($\frac{\lambda_t}{\lambda_w} \tau_e + c_e$ replaces $\tau_e$)
    \item The constraint $y_e=0$ for $e \in E^<$ is removed, which yields an additional dual constraint. This turns out to be useful for our proof, and does not change the optimal primal solution.
\end{enumerate}
\begin{alignat}{2}
    \min \quad & \sum_{e=vw \in E} \left( \frac{\lambda_t}{\lambda_w} \tau_e + c_e \right) y_e & \quad & \\
    \text{s.t.} \quad & \text{$y$ is an $s$-$t$-flow of value } u & & \\\label{eq:pri}
    & y_e = \lambda_w \nu_e & & \quad \forall e = vw \in E^> \\
    & y_e \leq \lambda_w \nu_e & & \quad \forall e = vw \in E^= \cup E^< \\
    & y_e \geq 0 & & \quad \forall e \in E
\end{alignat}

Its dual is:
\begin{alignat}{2}
    \max \quad & u (d_s - d_t) - \sum_{e=vw \in E} \lambda_w \nu_e p_e & \quad & \\\label{eq:dual}
    \text{s.t.} \quad & d_v - d_w - p_e \leq \frac{\lambda_t}{\lambda_w} \tau_e + c_e & & \quad \forall e = vw \in E \\\label{eq:dual2}
    & p_e \geq 0 & & \quad \forall e \in E^= \cup E^<
\end{alignat}

We first prove that $(y, \lambda)$ is a source thin flow for every feasible solution $y$ to the LP.

\begin{lemma}\label{lem:stf}
   Let the $s$-$t$-flow $y$ be a feasible solution to the LP. Then $(y,\lambda)$ is a source thin flow.
\end{lemma}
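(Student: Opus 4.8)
The goal is to verify the three thin-flow defining equations \eqref{eq:stf1}--\eqref{eq:stf3} for the pair $(y,\lambda)$, where $\lambda$ is the label vector coming from the fixed source thin flow $(\bar y,\lambda)$ and $y$ is an arbitrary feasible solution of the primal LP \eqref{eq:pri}. Equation \eqref{eq:stf1} ($\lambda_s = 1$) is immediate since $\lambda$ is inherited unchanged from $(\bar y,\lambda)$. The substance is in \eqref{eq:stf2} and \eqref{eq:stf3}, and the key observation is that the LP feasibility constraints encode exactly what those two equations demand: the hard constraint $y_e = \lambda_w\nu_e$ for $e=vw\in E^>$ forces $y_e/\nu_e = \lambda_w > \lambda_v$, hence $\max\{\lambda_v, y_e/\nu_e\} = \lambda_w$; the inequality $y_e \le \lambda_w\nu_e$ for $e\in E^=\cup E^<$ gives $y_e/\nu_e \le \lambda_w \le \lambda_v$ (using $\lambda_w\le\lambda_v$ on these edge classes), hence $\max\{\lambda_v, y_e/\nu_e\} = \lambda_v$.

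\textbf{Checking \eqref{eq:stf3}.} Let $e=vw$ with $y_e>0$. If $e\in E^>$, then by the previous paragraph $\max\{\lambda_v,y_e/\nu_e\} = \lambda_w$, which is exactly \eqref{eq:stf3}. If $e\in E^=\cup E^<$, then $\max\{\lambda_v,y_e/\nu_e\} = \lambda_v$, so I must show $\lambda_w = \lambda_v$; for $e\in E^=$ this holds by definition of $E^=$, so the only thing to rule out is $e\in E^<$ with $y_e>0$. This is where I would invoke that $(\bar y,\lambda)$ is itself a source thin flow: by \eqref{eq:stf2} applied to $(\bar y,\lambda)$, $\lambda_w = \min_{e'=v'w\in E}\max\{\lambda_{v'},\bar y_{e'}/\nu_{e'}\} \le \lambda_v$ for every $e=vw$, which is consistent, but I actually need the stronger fact that carrying positive flow on an edge with $\lambda_w<\lambda_v$ is impossible in any LP-feasible $y$. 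I expect the cleanest route is: such an edge could be rerouted — more precisely, I would argue that if $y_e>0$ on some $e=vw\in E^<$, one can push that flow back, contradicting optimality; but since the lemma is about \emph{feasible} not optimal $y$, I instead suspect the intended argument is different, namely that \eqref{eq:stf3} for edges in $E^<$ with $y_e>0$ may simply fail for feasible solutions and the lemma implicitly restricts to the relevant solutions, OR that a separate combinatorial argument (using that $t$ is reachable and flow conservation) shows $E^<$ edges cannot carry flow in any $s$-$t$-flow respecting the $E^>$-saturation constraints. I would examine whether $\lambda$ being a thin-flow label forces the min-cut structure so that $E^<$ edges lie "behind" a saturated cut.

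\textbf{Checking \eqref{eq:stf2}.} Fix $w\in V\setminus\{s\}$. I need $\lambda_w = \min_{e=vw\in E}\max\{\lambda_v, y_e/\nu_e\}$. For the "$\le$" direction: since $(\bar y,\lambda)$ is a thin flow, there is an incoming edge $e=vw$ with $\lambda_w = \max\{\lambda_v, \bar y_e/\nu_e\}$. If $\lambda_w = \lambda_v$ (the edge realizing the min is in $E^=$ or is an $E^>$-type only when $\bar y_e/\nu_e\le\lambda_v$, impossible), one checks $\max\{\lambda_v, y_e/\nu_e\} = \lambda_v = \lambda_w$ using $y_e\le\lambda_w\nu_e$. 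If $\lambda_w>\lambda_v$, then $e\in E^>$ so $y_e=\lambda_w\nu_e$ by the hard constraint and again $\max\{\lambda_v,y_e/\nu_e\}=\lambda_w$. For the "$\ge$" direction: for every incoming $e=vw$, either $e\in E^>$ and $\max\{\lambda_v,y_e/\nu_e\}=\lambda_w$, or $e\in E^=\cup E^<$ and $\max\{\lambda_v,y_e/\nu_e\}=\lambda_v\ge\lambda_w$ (using $\lambda_v\ge\lambda_w$ on these classes). In all cases the quantity is $\ge\lambda_w$, giving the inequality. Combining the two directions yields \eqref{eq:stf2}.

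\textbf{Main obstacle.} The routine parts — the max computations from the LP constraints, \eqref{eq:stf1}, and the "$\ge$" half of \eqref{eq:stf2} — are straightforward. The genuinely delicate point is \eqref{eq:stf3} on edges in $E^<$: showing that no LP-feasible $s$-$t$-flow $y$ can carry flow on a backward edge $e=vw$ with $\lambda_w<\lambda_v$. I expect the resolution uses that $\lambda$, coming from a thin flow, induces a laminar/cut structure in which the saturated edges $E^>$ form (the analogue of) a min cut, so any $s$-$t$-flow of value $u$ must be "monotone" with respect to the $\lambda$-levels and cannot use $E^<$ edges; alternatively the authors may handle it by a direct flow-decomposition plus exchange argument. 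I would aim to pin down exactly which of these the surrounding text supports before committing to the details.
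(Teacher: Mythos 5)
Your handling of \eqref{eq:stf1}, the ``$\ge$'' direction of \eqref{eq:stf2}, the ``$\le$'' direction of \eqref{eq:stf2} via the edge realizing the minimum for $\bar y$, and \eqref{eq:stf3} on $E^> \cup E^=$ are all correct, and your treatment of \eqref{eq:stf2} is in fact slightly tidier than the paper's case split on whether $w$ has positive $y$-inflow. But there is a genuine gap exactly where you flag one: you never establish that a feasible $y$ carries no flow on $E^<$ edges, and without it \eqref{eq:stf3} is unproven for those edges. Your instinct that this comes from a cut induced by the $\lambda$-levels is the right one, and here is the argument the paper actually runs. Fix $\hat e = vw \in E^<$ and set $S = \{u \in V \mid \lambda_u \le \lambda_w\}$. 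By \Cref{lem:lam}, $s \in S$ and $t \notin S$, so $(S, V\setminus S)$ is an $s$-$t$ cut, and every edge in $\delta^+(S)$ lies in $E^>$ while $\hat e \in \delta^-(S)$. Because $(\bar y, \lambda)$ is a thin flow, $\bar y$ vanishes on $\delta^-(S)$ and is saturated at $\lambda_{w'} \nu_e$ on each $e = u'w' \in \delta^+(S)$ (using \eqref{eq:stf2}--\eqref{eq:stf3} for $\bar y$), so $\sum_{e=u'w' \in \delta^+(S)} \lambda_{w'} \nu_e = u$. But the primal constraint forces any feasible $y$ to also equal $\lambda_{w'}\nu_e$ on every $e \in \delta^+(S) \subseteq E^>$, hence $\sum_{e\in\delta^+(S)} y_e = u$; since $y$ has value $u$ across every $s$-$t$ cut, this forces $\sum_{e\in\delta^-(S)} y_e = 0$ and in particular $y_{\hat e} = 0$.

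So your proof plan is on the right track and the routine verifications are sound, but you stopped short of the one nontrivial step. The exchange/rerouting alternative you float would not work here, since the lemma is about \emph{feasible} (not optimal) $y$, as you yourself observe; the cut-saturation argument above is the one that closes the gap without invoking optimality.
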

\begin{proof}
    Observe that from the primal constraints $$\max\{\lambda_v,y_e/\nu_e\}=\begin{cases}
        \lambda_w&\text{ if }e\in E^>,\\
        \lambda_v=\lambda_w&\text{ if }e\in E^=,\\
        \lambda_v&\text{ if }e\in E^<,
    \end{cases}$$
    which implies that $\lambda_w \leq \max \{ \lambda_v, y_e/\nu_e\}$ for all $e\in E$.
   
    We prove that $(y,\lambda)$ is a source thin flow. 
    First, recall that there exists a $\bar y$ such that $(\bar y, \lambda)$ is a source thin flow. Thus, clearly $\lambda_s=1$, i.e., \eqref{eq:stf1} is satisfied. 
    
    By the observation, we have that $\eqref{eq:stf3}$ is satisfied for all $e\in E^>\cup E^=$. To show $\eqref{eq:stf3}$ is satisfied for all $e\in E$, we will prove that $y_e=0$ for all $e\in E^<$.

    Assume that $\hat e=vw \in E^<$. Then by the observation, $\lambda_w<\lambda_v=\max\{\lambda_v,y_e/\nu_e\}$. Define $S=\{u \in V \mid \lambda_u \leq \lambda_w\}$. 
    By \Cref{lem:lam}, we get that $(S,V\setminus S)$ is an $s$-$t$ cut. Moreover, by definition $\hat e \in \delta^-(S)$. 
    As $\bar y$ is an $s$-$t$ flow we know from \eqref{eq:stf3} that $\bar y_e=0$ for all $e\in \delta^-(S)$, so we have that 
    \begin{align*}\sum_{e\in\delta^+(S)}\bar y_e-\sum_{e\in\delta^-(S)}\bar y_e=\sum_{e \in \delta^+(S)}\bar{y}_e=\sum_{e\in\delta^+(S)}\lambda_w\nu_e=u.
    \end{align*}
    Where the second equality is from $\eqref{eq:stf3}$.
    Since all edges $e=vw \in \delta^+(S)$ are in $E^>$, from the primal constraints $y_e=\lambda_w\nu_e$. As $y$ is also an $s$-$t$ flow, we have that
    \begin{align*}
    u = \sum_{e\in\delta^+(S)}y_e - \sum_{e\in\delta^-(S)}y_e=\sum_{e\in\delta^+(S)}\lambda_w\nu_e - \sum_{e\in\delta^-(S)} y_e= u -  \sum_{e\in\delta^-(S)} y_e.
    \end{align*}
    This implies that $y_e=0$ for all $e\in\delta^-(S)$ and thus in particular for $\hat e$.

    We are left to show $\eqref{eq:stf2}$ is satisfied. First, for every vertex $w$ with positive inflow with respect to $y$, this is directly implied from $\eqref{eq:stf3}$, as for edges with $y_e=0$ we have $e\in E^=\cup E^<$ and thus $\lambda_w\leq \lambda_v$. 
    
    We now prove $\eqref{eq:stf2}$ is satisfied for a vertex $w\neq s$ without any positive inflow with respect to $y$. For every edge $e=vw \in E$, we know $y_e=0$. We already observed that $\lambda_w \leq \lambda_v= \max \{ \lambda_v, y_e/\nu_e\}$ for all $e=vw \in E$. So assume, contradictory to what we want to prove, that $\lambda_{w}<\lambda_{v}$ for all $e=v w\in E$. Then $\bar y_{e}=0$ for all $e=vw \in E$ from $\eqref{eq:stf3}$ for $(\bar{y}, \lambda)$. 
    This implies that $\max \{ \lambda_{v}, y_{ e}/\nu_{ e}\}=\lambda_{ v}$ for all $e= vw\in E$ and thus $\lambda_w<\min_{e=vw\in E}\lambda_v=\min_{e=vw\in E}\max \{ \lambda_{v}, y_{ e}/\nu_{e}\}$, which contradicts $\eqref{eq:stf2}$ for $(\bar{y}, \lambda)$.
\end{proof}

A particularly useful fact that the above result implies is that $y_e=0$ for all $e\in E^<$.

In order to describe feasible steady states, we prove that there is always an optimal dual solution $(d^*,p^*)$ that satisfies the following two additional constraints.

\begin{lemma}
\label{lem:dual_struct}
    There is an optimal dual solution $(d^*, p^*)$ for which:
\begin{enumerate}
    \item $d^*_t = 0$.
    \item $p^*_e \geq 0$ for every $e \in E$.
\end{enumerate}
\end{lemma}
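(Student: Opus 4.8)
**Proof plan for Lemma (optimal dual solution with $d_t^*=0$ and $p_e^*\geq 0$ for all $e$):**

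The plan is to start from an arbitrary optimal dual solution $(d,p)$ — which exists because the primal LP is feasible (a source thin flow $\bar y$ exists and is primal-feasible by \Cref{lem:stf}) and bounded below by $0$ — and then perform two normalizations. First I would exploit the fact that the primal value $u$ is fixed and the dual objective depends on $d$ only through $d_s-d_t$; hence adding a constant to every $d_v$ leaves the objective unchanged and preserves all dual constraints \eqref{eq:dual}, since each such constraint only involves a difference $d_v-d_w$. So I would shift all $d$-labels by $-d_t$ to obtain $d_t^*=0$ without loss of generality.

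The substantive part is establishing that we may take $p_e^*\geq 0$ for every edge, not just for $e\in E^=\cup E^<$ as \eqref{eq:dual2} requires. For an edge $e=vw\in E^<$ or $E^=$ there is nothing to do. For $e=vw\in E^>$, I would argue we can simply raise $p_e$ to $\max\{p_e,0\}$: increasing $p_e$ only relaxes the constraint \eqref{eq:dual} for that edge, so feasibility is maintained; and in the objective the coefficient of $p_e$ is $-\lambda_w\nu_e<0$, so increasing $p_e$ can only \emph{decrease} the objective. This seems to threaten optimality — so the key point must be that at an optimal solution we already have $p_e=0$ for $e\in E^>$ whenever that edge's primal constraint is tight, i.e. we can characterize the optimal $p_e$ on $E^>$ edges exactly. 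Indeed, complementary slackness with the primal constraint $y_e=\lambda_w\nu_e$ (which holds with equality for \emph{all} $e\in E^>$, since these are equality constraints in the primal) does not force $p_e$ to any particular sign. So instead I would argue directly: given any optimal $(d,p)$, define $p_e^*=\max\{p_e,0\}$ for all $e$. Feasibility of \eqref{eq:dual} is preserved (RHS fixed, LHS weakly decreases). The objective changes by $-\sum_{e=vw}\lambda_w\nu_e(p_e^*-p_e)\geq 0$ on the terms where $p_e<0$, so the new objective is $\geq$ the old one; combined with optimality of $(d,p)$, equality holds, so $(d^*,p^*)$ is also optimal and now satisfies $p_e^*\geq 0$ everywhere. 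Both normalizations are independent (shifting $d$ does not touch $p$), so we can apply them in sequence.

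The main obstacle I anticipate is making sure the second normalization is genuinely harmless: one must double-check that raising $p_e$ on an $E^>$ edge cannot break a dual constraint — but since $p_e$ appears in \eqref{eq:dual} only for its own edge $e$, with coefficient $-1$ on the left side, increasing it strictly relaxes that one constraint and affects no other. The only slightly delicate accounting is the objective comparison: we need $\lambda_w\nu_e\geq 0$, which holds since $\lambda_w\geq\lambda_s=1>0$ by \Cref{lem:lam} and $\nu_e>0$ by assumption, so every modified term moves the objective in the $\geq$ direction, and optimality pins it down to equality. With these two observations the lemma follows; no further calculation is needed.
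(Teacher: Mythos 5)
Your first normalization (shifting every $d_v$ by $-d_t$) is correct and matches the paper exactly. The second normalization, however, contains a sign error that is fatal to the argument.

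You set $p_e^*=\max\{p_e,0\}$ and claim the objective change is $-\sum_{e=vw}\lambda_w\nu_e(p_e^*-p_e)\geq 0$. But for any edge with $p_e<0$ you have $p_e^*-p_e=-p_e>0$, and since $\lambda_w\nu_e>0$ the corresponding term $-\lambda_w\nu_e(p_e^*-p_e)$ is \emph{strictly negative}. The dual is a maximization, so truncation strictly \emph{worsens} the objective whenever some $p_e<0$, and the resulting $(d^*,p^*)$ is \emph{not} optimal. You cannot then ``combine with optimality'' to conclude equality; that step is simply false. Indeed, it is easy to manufacture optimal dual solutions with $p_e<0$ for some $e\in E^>$ (the unsigned nature of $p_e$ on $E^>$ reflects the primal equality constraint), so one genuinely needs a nontrivial transformation, not a pointwise clipping.

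The missing idea is that the repair must be \emph{coordinated across a cut and compensated in $d$}. The paper picks an edge $e^-=vw$ with $p_{e^-}<0$ (necessarily in $E^>$), forms the $s$-$t$ cut $S=\{u:\lambda_u<\lambda_w\}$ with $e^-\in\delta^+(S)$, and then simultaneously subtracts $p_{e^-}$ from $p_e$ for \emph{all} $e\in\delta^+(S)$ and from $d_u$ for all $u\in S$. The increase in $u(d_s-d_t)$ (since $s\in S$, $t\notin S$) exactly cancels the decrease in $-\sum\lambda_w\nu_e p_e$, because $\sum_{e\in\delta^+(S)}\lambda_w\nu_e = u$ by the primal equality on $E^>$ edges together with $y_e^*=0$ on $\delta^-(S)\subseteq E^<$ (from \Cref{lem:stf}). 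Feasibility of the modified solution also has to be checked edge type by edge type (inside $S$, outside $S$, $\delta^+(S)$, $\delta^-(S)$), which your per-edge truncation sidesteps but which is exactly where the cut structure is used. Without this compensating shift in $d$, there is no way to preserve the objective value.
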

\begin{proof}
We first show that the dual LP is feasible and bounded. We do this by observing that both the primal and the dual LP are feasible (and using duality). 
The dual is feasible as the all 0 vector is a feasible solution. The primal LP is feasible as the source thin flow, which exists from \Cref{thm:existence_thin_flows}, satisfies all constraints.


Now, let $(d, p)$ be some optimal solution to the dual LP.

    We first prove that there is an optimal solution $(d^*, p)$ such that $d^*_t = 0$. Define $d^*_v=d_v-d_t$ for all $v \in V$, and observe that $d^*_t=0$. The objective of the dual LP is unchanged as $d^*_s - d^*_t = (d_s - d_t) - (d_t - d_t) = d_s - d_t$. Additionally, the solution remains feasible, as for every $e=vw \in E$, $d^*_v - d^*_w = (d_v - d_t) - (d_w - d_t) = d_v - d_w$. 
    
    We next prove that there is an optimal solution $(d^*, p^*)$ to the dual LP such that $p^*_e \geq 0$ for all $e \in E$. If $p_e\geq 0$ for every $e \in E$, we are done. Otherwise, let $e^-=vw$ be an edge such that $p_{e^-} <0$. Note that it must be that $e^- \in E^>$, so $\lambda_v < \lambda_w$.

    Define $S=\{u \in V \mid \lambda_u < \lambda_w\}$. By \Cref{lem:lam}, we get that $(S,V\setminus S)$ is an $s$-$t$ cut. Moreover, observe that $e^- \in \delta^+(S)$.

    We are now ready to define $(d^*,p^*)$:
    $$p^*_e=\begin{cases}
p_e-p_{e^-}&\text{ if }e \in \delta^+(S)\\
p_e&\text{ otherwise.}
\end{cases}$$

and $$d^*_u=\begin{cases}
d_u-p_{e^-}&\text{ if }u \in S\\
d_u&\text{ otherwise.}
\end{cases}$$

We show that $(d^*, p^*)$ is an optimal solution of the dual LP. First, we show that $(d^*, p^*)$ is feasible. For all $e=\bar v \bar w \notin \delta^+(S) \cup \delta^-(S)$, we have that $p^*_e=p_e$. Moreover, either $\bar v, \bar w \in S$ or $\bar v, \bar w \not\in S$. In either case $d_{\bar v}-d_{\bar w}=d_{\bar v}^*-d_{\bar w}^*$  and thus \eqref{eq:dual} is  satisfied. 
For all $e=\bar v \bar w \in \delta^+(S)$, we have that $\bar v \in S$ and $\bar w \in V \setminus S$ and thus $d^*_v - d^*_w - p^*_e = d_v - p_{e^-} - d_w - (p_e - p_{e^-}) = d_v -d_w -p_e$, so \eqref{eq:dual} is satisfied.
For all $e=\bar v \bar w \in \delta^-(S)$, we have that $\bar v \not\in S$ and $\bar w \in S$ and thus 
\[d^*_v - d^*_w - p^*_e = d_v  - (d_w - p_{e^-})- p_e\leq  d_v  - d_w - p_e \leq \frac{\lambda_t}{\lambda_w} \tau_e +c_e.\]

Observe that $\eqref{eq:dual2}$ is clearly satisfied.

Second, we show that $(d^*, p^*)$ is optimal. The objective of the dual LP is 
\begin{align*}
&u(d^*_s - d^*_t) - \sum_{e =vw \in E}\lambda_w\nu_e p_e^* \\
&= u(d_s-p_{e^-}-d_t) - \sum_{e = vw \in E} \lambda_w\nu_ep_e + \sum_{e \in \delta^+(S)}\lambda_w\nu_ep_{e^-} \\
&= u(d_s-p_{e^-}-d_t) - \sum_{e = vw \in E} \lambda_w\nu_ep_e + \sum_{e \in \delta^+(S)}y^*_ep_{e^-} \\
&= u(d_s-d_t) - \sum_{e = vw \in E} \lambda_w\nu_ep_e + up_{e^-} - up_{e^-}\\ 
&=  u(d_s-d_t) - \sum_{e = vw \in E} \lambda_w\nu_ep_e,
\end{align*}
where $y^*$ is an optimal primal solution. The second equality follows from $\eqref{eq:pri}$ in the primal constraints, and the third equality from $y^*$ being an $s$-$t$ flow, $(S,V\setminus S)$ being an $s$-$t$ cut, $e\in\delta^-(S)$ implying $e\in E^<$ and thus by \Cref{lem:stf}, $y^*_e=0$ for all $e\in\delta^-(S)$, together implying that $\sum_{e\in \delta^+(S)}y^*_e= u$.

Thus, $(d^*,p^*)$ is an optimal solution of the dual LP. By repeating this argument for all $e^-$ with $p_{e^-} <0$, we are left with an optimal solution for which $p_e^* \geq 0$ for all $e \in E$.
\end{proof}

\subsection{LP Induced Flow}
\label{sec:dynamic_eq}
We will now transform optimal solutions of the LP and its dual, to a deficit flow over time with initial queues which are used to prove Theorem \ref{thm:steady}. Let $y^*$ and $(d^*,p^*)$ be an optimal primal dual pair, satisfying the restrictions of \Cref{lem:dual_struct}. 
We now describe edge inflow functions and initial queues corresponding to the pair $y^*$ and $(d^*,p^*)$, which we denote by \emph{LP induced flow}, and that we will show to satisfy the three properties of Theorem \ref{thm:steady}. Define 
\begin{equation}
\label{eq:LPflow}
f^+_e(\theta) = \frac{y^*_e}{\lambda_v}
\quad \forall e=vw\in E, \,\,\,\, \forall \theta\geq0 \qquad \text{ and } \qquad q_e=\frac{\lambda_w}{\lambda_t}\cdot p^*_e \quad \forall e=vw\in E.
\end{equation}
  By \Cref{lem:lam}, this is well-defined and by \Cref{lem:dual_struct}, all initial queues are non-negative. Observe that from complementary slackness $p_e^*=0$ for all $e\in E$ with $y_e<\lambda_w \nu_e$. By defining $f^-$ as the outflow rates induced by the queues and $f^+$, we obtain the LP-induced flow $f=(f^+,f^-)$.

\subsubsection{Deficit flow}
We start by showing that $f$ is a deficit flow with initial queues $q_e$.
\begin{lemma}
\label{lem:deficit}
    An LP-induced flow $f$ is a deficit flow until $\taumax$ with initial queues $(q_e)_{e \in E}$ .
\end{lemma}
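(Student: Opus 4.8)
The plan is to verify the two defining conditions of a deficit flow until time $\taumax$ with initial queues $(q_e)_{e\in E}$: (i) the queue-operation / outflow rule, and (ii) the relaxed flow-conservation conditions (with equality at the source for all time, equality at interior vertices for $\theta\geq\taumax$, and the appropriate inequalities before that). Since the $f^+_e$ are constant and the $q_e$ are fixed non-negative numbers (well-defined and non-negative by \Cref{lem:lam} and \Cref{lem:dual_struct}), both conditions reduce to essentially algebraic checks on $y^*$, $\lambda$, and $p^*$.

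First I would handle the outflow rule. Fix $e=vw$. There are two cases according to whether $p^*_e=0$ or $p^*_e>0$. If $p^*_e>0$, then by complementary slackness $y^*_e=\lambda_w\nu_e$, so the constant inflow is $f^+_e=y^*_e/\lambda_v=(\lambda_w/\lambda_v)\nu_e$; since $e\notin E^<$ gives $\lambda_w\geq\lambda_v$, we get $f^+_e\geq\nu_e$, and together with the strictly positive initial queue $q_e=(\lambda_w/\lambda_t)p^*_e>0$ this forces $z_e(\theta)>0$ for all $\theta\geq0$ (the queue only grows when $f^+_e\geq\nu_e$), so the outflow rule demands $f^-_e(\theta+\tau_e)=\nu_e$, which is exactly how we defined $f^-$. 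If $p^*_e=0$, then $q_e=0$ and I would check that $f^+_e\leq\nu_e$: for $e\in E^>$ we have $y^*_e=\lambda_w\nu_e$ and $f^+_e=(\lambda_w/\lambda_v)\nu_e$, but $p^*_e=0$ with $p^*\geq0$ is consistent, and more to the point $\lambda_v<\lambda_w$ would make $f^+_e>\nu_e$ — so I need to argue this case cannot produce a depleting queue issue; actually the clean statement is that when $p^*_e=0$ the queue stays at $0$ precisely because $f^-$ is \emph{defined} as the outflow induced by $f^+$ and the (zero) initial queue, so $z_e(\theta)=0$ for all $\theta$ and $f^-_e(\theta+\tau_e)=\min\{f^+_e,\nu_e\}$ holds by construction. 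So the real content is only the $p^*_e>0$ case above, where one must confirm the queue never empties — which it does not, since the inflow rate is at least $\nu_e$ at all times.

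Next I would verify flow conservation. For each vertex $v$ and each time $\theta$, I need $\sum_{e\in\delta^+(v)}f^+_e(\theta)-\sum_{e\in\delta^-(v)}f^-_e(\theta)$ to match the required value. For $\theta\geq\taumax$ every edge has "caught up": since all $\tau_e\leq\taumax$ and inflows are constant from time $0$, the outflow $f^-_e(\theta)$ for $\theta\geq\taumax$ equals $\nu_e$ if $p^*_e>0$ and equals $f^+_e$ if $p^*_e=0$; in either case, using complementary slackness and \Cref{lem:dual_struct}, the outflow of $e=vw$ equals $\min\{f^+_e,\nu_e\}$ pushed forward, and I claim this equals $f^+_e$ scaled appropriately so that conservation holds. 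The key identity is that $\big(\tfrac{x_e(\mu_v+1)}{\mu_s+1}\big)$-type scaling relates $f^+_e=y^*_e/\lambda_v$ back to the $s$-$t$-flow $y^*$: concretely, for $\theta\geq\taumax$ the net flow at $v\neq s,t$ should telescope to $0$ because $y^*$ is an $s$-$t$-flow of value $u$ and the $\lambda_v$-scaling is consistent along edges (inflow into $v$ along $e=uv$ carries rate $y^*_e/\lambda_u$ while flow conservation of $y^*$ at $v$ reads $\sum_{uv}y^*_{uv}=\sum_{vw}y^*_{vw}$). I would make this precise by noting that the outflow of $e=uv$ arriving at $v$ has rate $\min\{f^+_e,\nu_e\}$, and by complementary slackness plus the primal constraints this equals $y^*_e/\lambda_v$ (one checks: if $p^*_e>0$ then $y^*_e=\lambda_w\nu_e$ hence $\nu_e=y^*_e/\lambda_w=y^*_e/\lambda_v$ since here... ) — this is the slightly delicate bookkeeping. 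For $s$, the outflow of the virtual source edge supplies rate $u$ for all $\theta$, matching the required equality at all times. For $\theta<\taumax$, at interior vertices the outflow of incoming edges may not yet have started (edges with $\tau_e>\theta$ have $f^-_e(\theta)=0$), so the net divergence is $\geq0$ as required, and at $t$ it is $\leq0$; these inequalities are immediate since we only ever subtract nonnegative outflow terms and the "source-treatment" of interior vertices absorbs any deficit.

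The main obstacle I anticipate is the flow-conservation verification for $\theta\geq\taumax$ at interior flow-carrying vertices: one must show that the $\lambda$-scaled inflow rates $y^*_e/\lambda_v$, together with the outflow rule, actually balance. This hinges on the identity that for an active edge $e=vw$ the realized outflow rate (as seen at $w$) equals $y^*_e/\lambda_w$ — equivalently that $\min\{f^+_e,\nu_e\}\cdot(\text{FIFO shift})$ delivers rate $y^*_e/\lambda_w$ at $w$ — which I would derive by a careful case split on $p^*_e=0$ versus $p^*_e>0$ using complementary slackness and the primal constraints $y^*_e=\lambda_w\nu_e$ on $E^>$ and $y^*_e=0$ on $E^<$ (the latter from the remark after \Cref{lem:stf}). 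Once that rate-transformation identity is in hand, flow conservation of $y^*$ at $v$ divided through by $\lambda_v$ closes the argument. \qed
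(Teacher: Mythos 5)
Your overall plan matches the paper's proof: verify the outflow rule and the relaxed flow conservation, with the crucial rate--transformation identity that the outflow of $e=vw$ delivered to $w$ equals $y^*_e/\lambda_w$, and then read flow conservation of the LP-induced rates off flow conservation of $y^*$ scaled by $1/\lambda_w$. The key complementary-slackness observation ($p^*_e>0$ forces $f^+_e\geq\nu_e$, so edges with a positive initial queue never drain) is also the paper's.

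Where you go astray is in the $p^*_e=0$ case. You claim that $p^*_e=0$ implies the queue stays at $0$ and that $f^-_e$ equals $f^+_e$. Both are false for edges $e=vw\in E^>$ with $p^*_e=0$ (the LP only forces $p^*_e>0$ via complementary slackness when $y^*_e=\lambda_w\nu_e$, it does not force $p^*_e>0$ whenever that primal constraint is tight): there $y^*_e=\lambda_w\nu_e$ by the primal constraint, so $f^+_e=(\lambda_w/\lambda_v)\nu_e>\nu_e$, the queue grows from $0$, and the outflow is $\nu_e$, not $f^+_e$. None of this breaks the argument --- in every case the outflow is $\min\{f^+_e,\nu_e\}$, and this equals $\nu_e=y^*_e/\lambda_w$ on $E^>$, $y^*_e/\lambda_w$ on $E^=$, and $0=y^*_e/\lambda_w$ on $E^<$ (using $y^*_e=0$ there, from \Cref{lem:stf}) --- but the right case split is on $E^>,E^=,E^<$, as in the paper, not on the sign of $p^*_e$. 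With that fix your bookkeeping closes exactly as the paper's does: the net divergence at each $w\neq s,t$ for $\theta\geq\taumax$ is $\frac{1}{\lambda_w}\bigl(\sum_{e\in\delta^+(w)}y^*_e-\sum_{e\in\delta^-(w)}y^*_e\bigr)=0$, and for $\theta<\taumax$ only nonnegative outflow terms can be missing, giving the required $\geq 0$.
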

\begin{proof}
    Note that the definition of $f_e^+$ is feasible, because $\lambda_s=1$ and $\lambda_v\geq 1$ for all $v \in V$. Since $f^+_e(\theta) = \frac{y_e^*}{\lambda_v}$ for all $e=vw\in E$ and for all $\theta\geq0$, $$f^-_e(\theta+\tau_e)=\min\{f_e^+(\theta),\nu_e\}=\begin{cases}
        \nu_e&\text{ if }e\in E^>,\\
        y_e^*/\lambda_w&\text{ if }e=vw\in E^=,\\
        0&\text{ if }e\in E^<,
    \end{cases}$$ 
    
    for all $e\in E$ and for all $\theta\geq0$. Recall that from complementary slackness $p_e^*=0$ for all $e\in E$ with $y^*_e<\lambda_w \nu_e$. This implies that all edges with an initial non-zero queue have an outflow of $\nu_e$ and an inflow of at least $\nu_e$.
    Moreover, we can deduce that for every $\theta\geq \max\{\tau_e\mid e\in E^>\cup E^=\}$ (and thus in particular for every $\theta \geq \taumax$), for every $w \in V$,
    \begin{align*}
        \sum_{e\in\delta^+(w)}f_e^+(\theta)-\sum_{e \in \delta^-(w)}f_e^-(\theta)&=\sum_{e\in\delta^+(w)}\frac{y^*_e}{\lambda_w}-\sum_{e \in \delta^-(w)\cap E^>}\nu_e-\sum_{e \in \delta^-(w)\cap E^=}\frac{y^*_e}{\lambda_w}\\
        &=\sum_{e\in\delta^+(w)}\frac{y^*_e}{\lambda_w}-\sum_{e \in \delta^-(w)\cap E^>}\frac{y^*_e}{\lambda_w}-\sum_{e \in \delta^-(w)\cap E^=}\frac{y^*_e}{\lambda_w}\\
        &=0.\end{align*}

        Where the first equality is from the definition of the LP-induced flow, and the second equality is from the primal constraints.

        For $\theta \leq \max\{\tau_e\mid e\in E^>\cup E^=\}\leq \taumax$ there can be terms of $f_e^-(\theta)$ which are 0. As all terms are non-negative it follows that in this case vertices  can have a positive net outflow, which is allowed in a deficit flow.
\end{proof}

\subsubsection{Steady state}
We proceed to prove that $f$ is in steady state from time $0$ onward.

\begin{lemma}
\label{lem:q_deriv}
For an LP induced flow, $q_e'(\theta) =\frac{\lambda_w}{\lambda_v} -1$ for all $e=vw\in E^>$, and $q_e'(\theta) =0$ for all $e\in E^=\cup  E^<$. 
\end{lemma}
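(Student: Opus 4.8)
The plan is to treat each edge $e=vw$ according to which part of the partition $E=E^>\,\dot\cup\,E^=\,\dot\cup\,E^<$ it lies in, and in every case reduce the claim to the identity
\[
q_e'(\theta)=\frac{z_e'(\theta)}{\nu_e}=\frac{f_e^+(\theta)-f_e^-(\theta+\tau_e)}{\nu_e},
\]
obtained by differentiating $z_e(\theta)=F_e^+(\theta)-F_e^-(\theta+\tau_e)+\nu_e q_e$ and using that $f_e^-$ vanishes on $[0,\tau_e]$. By \eqref{eq:LPflow} the inflow rate $f_e^+(\theta)=y^*_e/\lambda_v$ is constant, and the constant outflow rates $f_e^-(\cdot+\tau_e)$ for each of the three classes were already identified in the proof of \Cref{lem:deficit}; so once the outflow rate is pinned down, $q_e'$ drops out immediately.

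For $e=vw\in E^>$ the primal constraint \eqref{eq:pri} gives $y^*_e=\lambda_w\nu_e$, hence $f_e^+(\theta)=\lambda_w\nu_e/\lambda_v$, which strictly exceeds $\nu_e$ since $\lambda_w>\lambda_v$. Then $z_e$ is strictly positive for every $\theta>0$: if $q_e>0$ it starts positive, while if $q_e=0$ then $z_e(0)=0$, so $f_e^-(\tau_e)=\min\{f_e^+(0),\nu_e\}=\nu_e$ and $z_e$ starts increasing immediately. Thus the queue operates at capacity, $f_e^-(\theta+\tau_e)=\nu_e$ for all $\theta>0$, and substituting into the identity above gives $q_e'(\theta)=\lambda_w\nu_e/(\lambda_v\nu_e)-1=\lambda_w/\lambda_v-1$, with equality at $\theta=0$ by right-continuity.

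For $e\in E^<$, \Cref{lem:stf} gives $y^*_e=0$, so $f_e^+\equiv0$; since then $y^*_e<\lambda_w\nu_e$, complementary slackness forces $p^*_e=0$ and hence $q_e=\tfrac{\lambda_w}{\lambda_t}p^*_e=0$, so $z_e\equiv0$ and $q_e'(\theta)=0$. For $e=vw\in E^=$ we have $\lambda_v=\lambda_w$, hence $f_e^+(\theta)=y^*_e/\lambda_w\le\nu_e$ by \eqref{eq:pri}; if $q_e=0$ the queue never forms, since it starts empty and the inflow never exceeds capacity, and if $q_e>0$ then complementary slackness gives $y^*_e=\lambda_w\nu_e$, whence $f_e^+(\theta)=\nu_e=f_e^-(\theta+\tau_e)$ and the queue stays at its initial value; either way $q_e'(\theta)=0$. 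The only point needing care is the self-referential nature of the queue dynamics during the initial phase $\theta<\tau_e$; but since every inflow and outflow rate involved is constant, $z_e$ is piecewise linear with exactly the slopes computed above, so I do not expect a genuine obstacle here — the whole argument is a short direct computation once \Cref{lem:stf}, complementary slackness, and the outflow rates established in the proof of \Cref{lem:deficit} are in hand.
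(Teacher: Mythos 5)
Your proof is correct and follows essentially the same approach as the paper: reduce $q_e'$ to the identity $q_e' = (f_e^+ - f_e^-(\cdot+\tau_e))/\nu_e$, pin down inflow from \eqref{eq:LPflow} and the primal constraints, pin down outflow from the queue dynamics, and use complementary slackness to show $q_e = 0$ (hence no queue ever forms) on edges with $f_e^+ < \nu_e$. The only difference is organizational — you treat the three classes $E^>$, $E^=$, $E^<$ separately and explicitly split $E^=$ into the subcases $q_e > 0$ versus $q_e = 0$, whereas the paper groups $E^>$ together with the saturated part of $E^=$ and handles the unsaturated part of $E^=$ jointly with $E^<$ — but the underlying facts invoked are identical.
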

\begin{proof}
We distinguish two cases. If  $e\in E^=$ and $y_e=\lambda_w \nu_e$ or $e\in E^>$, then:
\begin{align*}
 q_e'(\theta) = \frac{f_e^+(\theta) - \nu_e }{\nu_e}=\frac{\frac{\nu_e \lambda_w}{\lambda_v} - \nu_e }{\nu_e}=\frac{\lambda_w}{\lambda_v} - 1.
\end{align*}

For edges $e\in E^=$ such that $y^*_e<\lambda_w \nu_e$, $p^*_e=0$ from complementary slackness. For edges $e \in E^<$, we showed $y^*_e=0<\lambda_w \nu_e$, so $p_e^*=0$ from complementary slackness as well. Thus, since $f_e^+(\theta)<\nu_e$, $q_e'(\theta)=0$ for both these types of edges.
\qed
\end{proof}
Note that this implies $q_e'(\theta)\geq 0$ for all $e \in E$ and $\theta \geq 0$. Moreover, clearly queues just change linearly from time 0 on, meaning the flow is in steady state. 

\subsubsection{Dynamic equilibrium}
We are now ready to prove the last property: that the LP-induced flow is a dynamic equilibrium. Recall that a flow $f$ is a dynamic equilibrium if whenever an edge $e$ has positive inflow at time $\theta$, then $e$ is active at that time, i.e. $e \in E_{\theta}$.

We provide an overview of the steps of this proof. First, for all $v\in V$ and all paths $P$ from $v$ to $t$, we define $c^P_v(\theta)$ as the cost to reach $t$ starting from $v$ at time $\theta$ using path $P$. We use this definition to deduce that the rate of increase of cost on flow-carrying paths is minimal among all paths, and is equal to $\mu_v=\lambda_t/\lambda_v - 1$.

Next, we show that at time $0$, the cost on flow-carrying paths is minimal among all paths, and is equal to $d_v^*$. 

Lastly, we combine these claims on the minimality of $c_v(0)$ and $(c_v^P)'$ for paths of flow-carrying edges to deduce that they are minimum cost paths.



Note that in the following lemma, we overload the notation of the path $P$, to also denote the set of vertices used in the path.

    \begin{lemma}
    \label{lem:derivatives}
    For an LP induced flow, for each $v\in V$, for each $v$-$t$ path $P$
    \[(c^{P}_{v})'(\theta) \geq\\
    \lambda_t/\min_{u \in P}\lambda_u-1\geq \\
    \lambda_t/\lambda_v-1\] Moreover, if $e\in E^=\cup E^>$ for all $e\in P$, then 
    \[(c^{P}_{v})'(\theta) = \lambda_t/\lambda_v-1\].
    \end{lemma}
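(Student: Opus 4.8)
The plan is to prove both assertions by structural induction on the $v$-$t$ path $P$, unwinding the recursive description of $c^P_v$. Writing $P = e\cdot P'$ with $e=vw$ and $P'$ a $w$-$t$ path, the cost along $P$ satisfies $c^P_v(\theta)=q_e(\theta)+\tau_e+c_e+c^{P'}_w(T_e(\theta))$, where $T_e(\theta)=\theta+q_e(\theta)+\tau_e$ is the time at which a particle entering $e$ at time $\theta$ leaves it. By \Cref{lem:q_deriv} every queue function is affine on $[0,\infty)$ with constant slope $q_e'$, so each $T_e$ is affine with slope $1+q_e'>0$, and a trivial induction shows each $c^P_v$ is affine; hence $(c^P_v)'$ is a genuine constant and differentiating through the composition is unproblematic. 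Explicitly, $(c^P_v)' = q_e' + (c^{P'}_w)'\,(1+q_e')$, with $1+q_e'=\lambda_w/\lambda_v$ when $e\in E^>$ and $1+q_e'=1$ when $e\in E^=\cup E^<$.

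For the first inequality I would set $m:=\min_{u\in P'}\lambda_u$, so that $\min_{u\in P}\lambda_u=\min\{\lambda_v,m\}$, and assume inductively $(c^{P'}_w)'\ge \lambda_t/m-1$ (the base case $v=t$, $c^P_t\equiv 0$, is immediate). If $e\in E^>$, then $q_e'=\lambda_w/\lambda_v-1\ge 0$, and multiplying the inductive bound by $1+q_e'=\lambda_w/\lambda_v>0$ gives, after simplification, $(c^P_v)'\ge (\lambda_t/m)(\lambda_w/\lambda_v)-1$; it then remains to verify $(\lambda_t/m)(\lambda_w/\lambda_v)\ge \lambda_t/\min\{\lambda_v,m\}$, which reduces to $\lambda_w\ge m$ in the case $\lambda_v\le m$ and to $\lambda_w\ge\lambda_v$ in the case $m\le\lambda_v$ --- both hold, the first because $w\in P'$ forces $\lambda_w\ge m$, the second because $e\in E^>$. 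If instead $e\in E^=\cup E^<$, then $q_e'=0$ and $\lambda_w\le\lambda_v$, so $m\le\lambda_w\le\lambda_v$ yields $\min\{\lambda_v,m\}=m$ and $(c^P_v)'=(c^{P'}_w)'\ge \lambda_t/m-1$, which is precisely the claimed bound. The second inequality $\lambda_t/\min_{u\in P}\lambda_u-1\ge \lambda_t/\lambda_v-1$ is immediate from $v\in P$.

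For the ``moreover'' statement I would rerun the induction over paths all of whose edges lie in $E^=\cup E^>$, using the inductive equality $(c^{P'}_w)'=\lambda_t/\lambda_w-1$. If $e\in E^>$, then $(c^P_v)'=(\lambda_w/\lambda_v-1)+(\lambda_t/\lambda_w-1)(\lambda_w/\lambda_v)$ telescopes exactly to $\lambda_t/\lambda_v-1$; if $e\in E^=$, then $\lambda_w=\lambda_v$ and $q_e'=0$, so $(c^P_v)'=(c^{P'}_w)'=\lambda_t/\lambda_w-1=\lambda_t/\lambda_v-1$. I expect the only real obstacle to be the case analysis for the first inequality, and specifically the sub-case $e\in E^>$ with $\lambda_v\le m$, where one must exploit that $w$ sits on the remaining subpath $P'$ (hence $\lambda_w\ge m$) rather than only the defining inequality $\lambda_w>\lambda_v$ of $E^>$; everything else is bookkeeping with the affine formulas for $q_e$ and $T_e$ and the positivity $\lambda_s\le\lambda_u\le\lambda_t$ from \Cref{lem:lam} that keeps the divisions well-defined.
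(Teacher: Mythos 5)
Your proof is correct and follows essentially the same route as the paper's: induction on path length, recursively expanding $c^P_v(\theta)=q_e(\theta)+\tau_e+c_e+c^{P'}_w(T_e(\theta))$ and using \Cref{lem:q_deriv} for the queue slopes, with the same case split on whether $\lambda_w\ge\lambda_v$ or not and the same reduction of the key inequality to $\lambda_w\ge m$ versus $\lambda_w\ge\lambda_v$. The only cosmetic differences are that you anchor the induction at the trivial path $v=t$ rather than at single-edge paths, and you make explicit that every $c^P_v$ is affine so the chain-rule step is unproblematic (the paper leaves this implicit, and in the $E^<$ case also invokes $q_e(\theta)=0$, which your affineness observation makes unnecessary).
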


    \begin{proof}
    We prove this by induction over the number of edges in the path. Assume that $P$ consists of a single edge $e$. By \Cref{lem:lam}, $e=vt \in E^=\cup E^>$. 
    Since by definition $c^P_v(\theta)=\tau_e+q_e(\theta)+c_e$, we have that
    \begin{align*}(c^{P}_{v})'(\theta)=q'_e(\theta)=\lambda_t/\lambda_v-1,
    \end{align*}
    where the second equality follows from \Cref{lem:q_deriv}. This proves the base case for both parts of the lemma.

    We now prove 
    \begin{equation}
    \label{eq:c_derivative}
        (c^{P}_{v})'(\theta) \geq \lambda_t/\min_{u \in P}\lambda_u-1
    \end{equation} 
    for paths $P$ of any length. This clearly implies $(c^{P}_{v})'(\theta) \geq \lambda_t/\lambda_v-1$ (as $v \in P$). 
    Assume \eqref{eq:c_derivative} holds true for all $v\in V$ and all $v$-$t$ paths consisting of $k\geq 1$ edges. Let $P$ be a $v$-$t$ path consisting of $k+1$ edges where the first edge $e\in P$ is $e=vw$. Let $P_{wt}$ denote the path $P \setminus \{e\}$.
    Observe that $c^P_v(\theta)=\tau_e+q_e(\theta)+c_e+c^{P_{wt}}_w(\theta+\tau_e+q_e(\theta))$. We distinguish between two cases: $e\in E^<$, and $e\in E^=\cup E^>$. If $e\in E^<$ we have that
    \begin{align*}
    (c^{P}_{v})'(\theta)&=q'_e(\theta)+(c^{P_{wt}}_w)'(\theta+\tau_e+q_e(\theta))\cdot (1+q'_e(\theta))=(c^{P_{wt}}_w)'(\theta+\tau_e)\\
    &\geq\lambda_t/\min_{u \in P_{wt}}\lambda_u-1=\lambda_t/\min_{u \in P}\lambda_u-1,
    \end{align*}
    where the second equality follows from \Cref{lem:q_deriv}, which shows that for $e \in E^<$, $q_e(\theta)=0$ and $q'_e(\theta)=0$. The inequality follows from the induction hypothesis, and the third equality follows from $e\in E^<$ which implies $\min_{u \in P}\lambda_u \neq \lambda_v$.

    If $e\in E^=\cup E^>$, we have that
    \begin{align*}
    (c^{P}_{v})'(\theta)&=q'_e(\theta)+(c^{P_{wt}}_w)'(\theta+\tau_e+q_e(\theta))\cdot (1+q'_e(\theta))\\&=\left(1+(c^{P_{wt}}_w)'(\theta+\tau_e+q_e(\theta))\right)\cdot (\lambda_w/\lambda_v-1)+(c^{P_{wt}}_w)'(\theta+\tau_e+q_e(\theta))\\
    &\geq\lambda_t/\min_{u \in P_{wt}}\lambda_u \cdot (\lambda_w/\lambda_v-1)+\left(\lambda_t/\min_{u \in P_{wt}}\lambda_u -1\right)\\
    &=\lambda_t\cdot\lambda_w/\left(\min_{u \in P_{wt}}\lambda_u \cdot \lambda_v\right)-1\\
    &\geq \lambda_t/\min_{u \in P}\lambda_u -1,
    \end{align*}
    where the second equality follows from \Cref{lem:q_deriv} which yields $q'_e(\theta)=\lambda_w/\lambda_v-1$, the first inequality follows from the induction hypothesis, and the second inequality follows from $\lambda_w\geq \min_{u \in P_{wt}}\lambda_u $ if $\min_{u \in P}\lambda_u =\lambda_v$ and $\lambda_w\geq \lambda_v$ if $\min_{u \in P}\lambda_u \neq \lambda_v$.

    We now prove the second statement using a similar induction claim. Assume that $(c^{P}_{v})'(\theta)=\lambda_t/\lambda_v-1$ for all $v\in V$ and all $v$-$t$ paths $P$ with $e\in E^=\cup E^>$ for all $e\in P$ consisting of $k\geq 1$ edges. Let $P$ be a $v$-$t$ path with $e\in E^=\cup E^>$ for all $e\in P$ consisting of $k+1$ edges where the first edge $e\in P$ is $e=vw$. We have that
    \begin{align*}
    (c^{P}_{v})'(\theta)&=q'_e(\theta)+(c^{P_{wt}}_w)'(\theta+\tau_e+q_e(\theta))\cdot (1+q'_e(\theta))\\&=\left(1+(c^{P_{wt}}_w)'(\theta+\tau_e+q_e(\theta))\right)\cdot (\lambda_w/\lambda_v-1)+(c^{P_{wt}}_w)'(\theta+\tau_e+q_e(\theta))\\
    &=\lambda_t/\lambda_w\cdot (\lambda_w/\lambda_v-1)+\left(\lambda_t/\lambda_w-1\right)\\
    &=\lambda_t/\lambda_v-1,
    \end{align*}
    where the second equality follows from \Cref{lem:q_deriv} and thus $q'_e(\theta)=\lambda_w/\lambda_v-1$ and the third equality follows from the induction hypothesis.
    \qed
    \end{proof}


We next consider the cost at time 0. For an LP induced flow, denote by $E^+\coloneqq\{e\in E\mid f^+_e>0\}$ the set of flow-carrying edges (recall that $f_e^+$ is constant).

\begin{lemma}
\label{lem:cost_zero_to_right}
  For an LP induced flow, for each $v \in V$, for each $v$-$t$ path $P$, $c_v^P(0) \geq d^*_v$. Moreover, if $e\in E^+$ for all $e\in P$, then $c_v^P(0)=d_v^*$.
\end{lemma}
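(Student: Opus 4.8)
The plan is to mirror the structure of \Cref{lem:derivatives}, but now working at time $\theta = 0$ and with the dual variables $(d^*, p^*)$ playing the role that the thin-flow labels $\lambda$ played there. The key identities are that $q_e(0) = q_e = \frac{\lambda_w}{\lambda_t} p^*_e$ for $e = vw$, and that $d^*_t = 0$ with $p^*_e \geq 0$ for all $e$ by \Cref{lem:dual_struct}. I would prove both statements simultaneously by induction on the number of edges in $P$.

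For the base case, let $P$ consist of the single edge $e = vt$. Then $c_v^P(0) = \tau_e + q_e(0) + c_e = \tau_e + \frac{\lambda_t}{\lambda_t} p^*_e + c_e = \tau_e + p^*_e + c_e$. Here $\lambda_w = \lambda_t$, so the dual constraint \eqref{eq:dual} reads $d^*_v - d^*_t - p^*_e \leq \frac{\lambda_t}{\lambda_t}\tau_e + c_e = \tau_e + c_e$, i.e. $d^*_v \leq \tau_e + p^*_e + c_e = c_v^P(0)$, using $d^*_t = 0$. If moreover $e \in E^+$, then $y^*_e > 0$ and complementary slackness forces the dual constraint to be tight, giving equality $c_v^P(0) = d^*_v$. (One should also note $e \in E^+$ implies $e \in E^= \cup E^>$, so $q_e(0)$ is the relevant queue and $f^-$ is consistent; this is already implicit from \Cref{lem:lam} and \Cref{lem:stf}.)

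For the inductive step, take a $v$-$t$ path $P$ whose first edge is $e = vw$, and write $P_{wt} = P \setminus \{e\}$. The particle entering $e$ at time $0$ arrives at $w$ at time $\theta' := \tau_e + q_e(0)$, so $c_v^P(0) = \tau_e + q_e(0) + c_e + c_w^{P_{wt}}(\theta')$. I would bound $c_w^{P_{wt}}(\theta')$ from below by $c_w^{P_{wt}}(0) + \theta' \cdot \inf_{\xi}(c_w^{P_{wt}})'(\xi)$ — but more cleanly, since costs are piecewise linear with slope at least $0$ by \Cref{lem:derivatives} (the derivative bound $\lambda_t/\min_u \lambda_u - 1 \geq 0$ follows from \Cref{lem:lam}), we simply have $c_w^{P_{wt}}(\theta') \geq c_w^{P_{wt}}(0) \geq d^*_w$ by the induction hypothesis. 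Then
\[
c_v^P(0) = \tau_e + q_e(0) + c_e + c_w^{P_{wt}}(\theta') \geq \tau_e + \tfrac{\lambda_w}{\lambda_t} p^*_e + c_e + d^*_w.
\]
It remains to check $\tau_e + \frac{\lambda_w}{\lambda_t} p^*_e + c_e + d^*_w \geq d^*_v$. Since $\lambda_w \leq \lambda_t$ and $p^*_e \geq 0$, we have $\frac{\lambda_w}{\lambda_t} p^*_e \leq p^*_e$; this inequality goes the wrong way, so instead I would use $\tau_e \leq \frac{\lambda_t}{\lambda_w}\tau_e$ (as $\lambda_w \leq \lambda_t$) to write $\frac{\lambda_t}{\lambda_w}\tau_e + p^*_e + c_e + d^*_w \geq \tau_e + \frac{\lambda_w}{\lambda_t}p^*_e + c_e + d^*_w$ — again the wrong direction. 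The resolution is to rescale: multiply through. The clean argument is that $\frac{\lambda_t}{\lambda_w}\left(\tau_e + \frac{\lambda_w}{\lambda_t}p^*_e + c_e\right) = \frac{\lambda_t}{\lambda_w}\tau_e + p^*_e + \frac{\lambda_t}{\lambda_w}c_e \geq \frac{\lambda_t}{\lambda_w}\tau_e + p^*_e + c_e \geq d^*_v - d^*_w$ by \eqref{eq:dual}; hence $\tau_e + \frac{\lambda_w}{\lambda_t}p^*_e + c_e \geq \frac{\lambda_w}{\lambda_t}(d^*_v - d^*_w)$. This still does not immediately give $\geq d^*_v - d^*_w$ unless $d^*_v \geq d^*_w \geq 0$, which I would establish as a preliminary monotonicity fact (analogous to \Cref{lem:lam}: along flow-carrying paths, complementary slackness and the structure of $E^>, E^=$ force the $d^*$ labels to be ordered, and for general vertices one argues via \Cref{lem:lam}-style path arguments). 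With $d^*_w \geq 0$ and $\lambda_w/\lambda_t \leq 1$, the factor shrinks a nonnegative quantity, so I do get $\tau_e + \frac{\lambda_w}{\lambda_t}p^*_e + c_e \geq \frac{\lambda_w}{\lambda_t}(d^*_v - d^*_w) \geq \ldots$ — and then I need $d^*_v - d^*_w \geq 0$ too, after which the whole bound collapses correctly to $c_v^P(0) \geq d^*_v$. For the equality part, if every edge of $P$ is in $E^+$, then along $P_{wt}$ the induction hypothesis gives $c_w^{P_{wt}}(0) = d^*_w$; since $e \in E^+ \subseteq E^= \cup E^>$ we have $\lambda_v \le \lambda_w$, and crucially if $e \in E^>$ then $y^*_e = \lambda_w \nu_e > 0$ and complementary slackness on \eqref{eq:dual} gives tightness, while if $e \in E^= \cap E^+$ the same slackness argument applies; chasing the equalities (with $\lambda_v = \lambda_w$ on $E^=$-edges collapsing the $\frac{\lambda_t}{\lambda_w}$ factor to $\frac{\lambda_t}{\lambda_v}$, and on $E^>$-edges the queue $q_e = \frac{\lambda_w}{\lambda_t}p^*_e$ providing exactly the slack) yields $c_v^P(0) = d^*_v$.

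The main obstacle I anticipate is precisely the sign/scaling bookkeeping in the inductive step: the factor $\frac{\lambda_w}{\lambda_t}$ in front of $p^*_e$ versus the factor $\frac{\lambda_t}{\lambda_w}$ in front of $\tau_e$ in the dual constraint pull in opposite directions, so the inequality $c_v^P(0) \geq d^*_v$ only works out once one has separately established the monotonicity $d^*_s \leq d^*_v \leq d^*_t = 0$ is \emph{false} — rather, one needs the correct ordering of $d^*$, which given $d^*_t = 0$ likely reads $d^*_v \geq 0$ with the right direction of inequalities along flow paths. I would therefore prove that auxiliary monotonicity statement first (it is the $d^*$-analogue of \Cref{lem:lam} and should follow from complementary slackness plus \Cref{lem:lam}), and only then run the induction above; the rest is routine algebra using \eqref{eq:dual}, \eqref{eq:LPflow}, and complementary slackness.
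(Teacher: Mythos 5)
Your base case is correct and matches the paper. The gap is in the inductive step, and you actually flag the symptom yourself but propose the wrong cure.

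The problem is your choice to bound $c_w^{P_{wt}}(\theta')$ using only the weak fact that the slope is nonnegative, giving $c_w^{P_{wt}}(\theta')\geq c_w^{P_{wt}}(0)\geq d_w^*$. This yields
\[
c_v^P(0)\;\geq\;\tau_e+\tfrac{\lambda_w}{\lambda_t}p_e^*+c_e+d_w^*,
\]
which, as you notice, does not match the dual constraint $\tfrac{\lambda_t}{\lambda_w}\tau_e+p_e^*+c_e\geq d_v^*-d_w^*$. Your attempted repair is to multiply by $\tfrac{\lambda_t}{\lambda_w}$ and invoke monotonicity of $d^*$. That cannot work: after rescaling you obtain
\[
\tau_e+\tfrac{\lambda_w}{\lambda_t}p_e^*+c_e\;\geq\;\tfrac{\lambda_w}{\lambda_t}\bigl(d_v^*-d_w^*\bigr),
\]
and since $\tfrac{\lambda_w}{\lambda_t}\leq 1$ (by \Cref{lem:lam}) while $d_v^*-d_w^*\geq 0$ under your hypothesized monotonicity, the right-hand side is \emph{smaller} than $d_v^*-d_w^*$, not larger. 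So the chain collapses in the wrong direction regardless of any $d^*$-ordering lemma you might prove. The auxiliary monotonicity claim does not rescue the argument.

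The correct move, and what the paper does, is the option you mention and then abandon: use the \emph{full} strength of \Cref{lem:derivatives}, namely $(c_w^{P_{wt}})'(\xi)\geq \lambda_t/\lambda_w-1$, not merely $\geq 0$. Then
\[
c_w^{P_{wt}}(\tau_e+q_e(0))\;\geq\; c_w^{P_{wt}}(0)+(\tau_e+q_e(0))\bigl(\tfrac{\lambda_t}{\lambda_w}-1\bigr),
\]
and adding this to $\tau_e+q_e(0)+c_e$ makes the $\tau_e+q_e(0)$ terms combine into $(\tau_e+q_e(0))\cdot\tfrac{\lambda_t}{\lambda_w}=\tfrac{\lambda_t}{\lambda_w}\tau_e+\tfrac{\lambda_t}{\lambda_w}\cdot\tfrac{\lambda_w}{\lambda_t}p_e^*=\tfrac{\lambda_t}{\lambda_w}\tau_e+p_e^*$. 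Together with $c_w^{P_{wt}}(0)\geq d_w^*$ this gives exactly $c_v^P(0)\geq\tfrac{\lambda_t}{\lambda_w}\tau_e+p_e^*+c_e+d_w^*\geq d_v^*$ by \eqref{eq:dual}, with no auxiliary lemma required. The extra slope from the derivative bound is precisely the mechanism that converts the ``physical'' quantity $\tau_e+q_e(0)$ into the ``dual-compatible'' quantity $\tfrac{\lambda_t}{\lambda_w}\tau_e+p_e^*$, and discarding it is what broke your bookkeeping. The equality case for $e\in E^+$ then follows by noting all three inequalities (the derivative bound, the induction hypothesis, and \eqref{eq:dual} via complementary slackness) are tight, as you correctly anticipated.
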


\begin{proof}

 We prove the two statements by induction over the number of edges in the path. Assume that $P$ consists of a single edge $e=vt$. Then
$c^P_v(0)=\tau_e+q_e(0)+c_e=\tau_e+\frac{\lambda_t}{\lambda_t}p^*_e+c_e = \frac{\lambda_t}{\lambda_t}\tau_e + p^*_e +c_e \geq d^*_v$, where the second equality follows by definition of an LP induced flow, the third equality since $d^*_t=0$ from Lemma~\ref{lem:dual_struct}, and the inequality follows from \eqref{eq:dual} in the dual. By complementary slackness, the inequality is an equality if $f^+_e>0$.

We now prove the statement for paths $P$ of any length. Assume that $c_v^P(0) \geq d^*_v$ for all $v\in V$ and all $v$-$t$ paths $P$ consisting of $k\geq 1$ edges, with equality if $e\in E^+$ for all $e\in P$. Let $P$ be a $v$-$t$ path consisting of $k+1$ edges where the first edge $e\in P$ is $e=vw$. Let $P_{wt}$ denote the path $P \setminus \{e\}$. Then
    \begin{align*}
    c^P_v(0)&=\tau_e+q_e(0)+c_e+c^{P_{wt}}_w(\tau_e+q_e(0))\\
    &\geq \tau_e+q_e(0)+c_e+c^{P_{wt}}_w(0)+(\tau_e+q_e(0))\cdot (\lambda_t/\lambda_w-1)\\
    &\geq(\lambda_t/\lambda_w)\tau_e+p^*_e+c_e+d^*_w\\
    &\geq d^*_v,
    \end{align*}
    where the first inequality follows from \Cref{lem:derivatives}, the second inequality from the definition of an LP induced flow and the induction hypothesis, and the third inequality from \eqref{eq:dual}. Observe that three inequalities are equalities if $e\in E^+$ and thus $e\in E^>\cup E^=$, where the last equality uses complementary slackness.
    \qed
    \end{proof}

It now follows from \Cref{lem:cost_zero_to_right} that every flow-carrying edge is active at time $0$. By \Cref{lem:derivatives}, we then have that every flow-carrying edge is active at every time $\theta>0$.
    \begin{corollary}
    \label{cor:flow_carrying_path_active}
        An LP induced flow is a dynamic equilibrium.
    \end{corollary}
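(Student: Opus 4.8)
The plan is to prove the dynamic equilibrium property directly: show that every edge carrying positive inflow is active at \emph{every} time $\theta\ge0$. Since the LP-induced flow has constant inflow rates, $f^+_e(\theta)>0$ holds exactly when $e\in E^+$, so this is precisely the required condition. The two ingredients already established are tailor-made for this. On a path $P$ whose edges all lie in $E^=\cup E^>$, \Cref{lem:derivatives} pins the growth rate $(c^P_v)'(\theta)$ down to the value $\lambda_t/\lambda_v-1$, and \Cref{lem:cost_zero_to_right} pins the initial value $c^P_v(0)$ down to $d^*_v$; on an \emph{arbitrary} $v$-$t$ path $Q$ both quantities are only bounded below by these same values. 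I would exploit this asymmetry: a flow-carrying path from $v$ is simultaneously cheapest at time $0$ and cheapest-increasing thereafter, hence cheapest for all time.

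Concretely, fix $e=vw\in E^+$. Recall from \Cref{lem:stf} that $y^*_e=0$ for $e\in E^<$, so $E^+\subseteq E^=\cup E^>$; after a routine reduction I would assume $y^*$ is acyclic (a circulation in the support can be removed without changing feasibility or the objective value -- no $E^>$ edge lies on a directed cycle, since along a cycle $\sum(\lambda_w-\lambda_v)=0$ and the support contains no $E^<$ edge, forcing every cycle edge into $E^=$). Flow decomposition of the $s$-$t$ flow $y^*$ then furnishes a flow-carrying $v$-$t$ path $P=(e)\cdot P_{wt}$ through $e$. For any $v$-$t$ path $Q$ I would combine $c^P_v(0)=d^*_v\le c^Q_v(0)$ (\Cref{lem:cost_zero_to_right}) with $(c^P_v)'=\lambda_t/\lambda_v-1\le(c^Q_v)'$ (\Cref{lem:derivatives}); since all these label functions are piecewise linear (queues grow linearly by \Cref{lem:q_deriv}), integrating yields $c^P_v(\theta)\le c^Q_v(\theta)$ for all $\theta\ge0$. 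As $c_v(\theta)$ equals the minimum of $c^Q_v(\theta)$ over all simple $v$-$t$ paths $Q$ (unrolling the fixed-point definition of $c_v$, using that transit times, queueing times and tolls are non-negative so cycles never help), this gives $c_v(\theta)=c^P_v(\theta)$, and therefore
\[
c_v(\theta)=c^P_v(\theta)=\tau_e+q_e(\theta)+c_e+c^{P_{wt}}_w(\theta+\tau_e+q_e(\theta))\ge\tau_e+q_e(\theta)+c_e+c_w(\theta+\tau_e+q_e(\theta))\ge c_v(\theta),
\]
so equality holds throughout, i.e.\ $e\in E_\theta$. Ranging over all $e\in E^+$ and all $\theta\ge0$ gives the corollary.

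Most of the analytic work is already discharged by \Cref{lem:derivatives} and \Cref{lem:cost_zero_to_right}, so what remains is essentially bookkeeping. The one step needing genuine (if light) care is the existence of a flow-carrying $v$-$t$ path through each $e\in E^+$: this is where the acyclicity reduction and the observation that $E^>$ edges lie on no directed cycle come in. A secondary point to state, rather than an obstacle, is the identity $c_v(\theta)=\min_Q c^Q_v(\theta)$ -- that the minimum-cost-to-reach-$t$ is realized by a simple path -- which follows from non-negativity of the edge data. I do not expect either point to be a real difficulty.
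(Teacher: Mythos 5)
Your proposal is correct and follows essentially the same route as the paper, whose proof of this corollary is just the two-sentence remark preceding it; you have expanded that remark into a full argument, and the chain $c_v(\theta)=c^P_v(\theta)\ge\tau_e+q_e(\theta)+c_e+c_w(\theta+\tau_e+q_e(\theta))\ge c_v(\theta)$ is exactly the intended derivation. The one substantive addition is the existence, for each $e=vw\in E^+$, of a flow-carrying $v$-$t$ path whose first edge is $e$: the paper leaves this implicit, and your acyclicity reduction (any cycle in the support of $y^*$ lies entirely in $E^=$ with nonnegative --- hence, by optimality, zero --- objective coefficients, so it can be canceled while keeping $y^*$ optimal and feasible) is a clean way to supply it. Note only that the reduction replaces $y^*$ and therefore changes which LP-induced flow is under discussion; this is harmless here since \Cref{thm:steady} needs only the existence of one such flow, but if you wanted the statement for an arbitrary optimal $y^*$ you would additionally observe that for a cycle edge complementary slackness (via the cycle-reduced optimum) forces $p^*_e=0$ and hence $q_e\equiv 0$, so the labels $c_u$ are constant around the cycle and cycle edges are trivially active.
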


Now the proof of \Cref{thm:steady} is straight forward.
\begin{proof}[of \Cref{thm:steady}]
Consider an LP induced flow $f$, with initial queues $q$ as defined in Equation~\eqref{eq:LPflow}. By \Cref{lem:deficit}, $f$ is a deficit flow with initial queues $q$. By definition, $f_e^+$ is constant for every $e \in E$. By \Cref{lem:derivatives}, $f$ is in steady state from time $0$ onward. By \Cref{cor:flow_carrying_path_active}, $f$ is a dynamic equilibrium.
\qed
\end{proof}

\section{Discussion}
\label{sec:discussion}
In this paper we augmented the existing deterministic fluid queueing model by introducing tolls. While tolls have been previously considered in the context of implementing specific flows as equilibria, to the best of our knowledge this is the first work attempting to understand the behavior of equilibrium flows under given constant tolls.


In this work, we aim at understanding the long term behavior of the model. We provide an example showing that convergence to steady state is not guaranteed, as a dynamic equilibrium might have infinitely many phases without reaching a steady state. Moreover, we describe how to construct an equilibrium flow in steady states.

Still basic questions for this model remain open and will benefit from further research. First, do dynamic equilibria always exist in our model? Based on current existence results, existence is only guaranteed if the inflow is positive for a fixed amount of time. Second, do dynamic equilibria still consist of phases, i.e., is a dynamic equilibrium piecewise constant? Third, can we say something about the efficiency of equilibria in this model?

An issue with extending the thin flow characterization of the toll-free model, is that with tolls, the travel time to arrive at a vertex $v$ for different particles originating at the same time $\theta$ can be different. In order to overcome this issue, we proposed an alternative sink thin flow definition. Can a definition along these lines be used to give a thin flow like characterization of Nash flows over time with tolls?

An additional shortcoming of our results is our limitation to single commodity instances and constant tolls.
Single-commodity refers to two things, first that all particles start at the same source and travel to the same sink and second that all particles have the same trade off between money and time.
An extension of our results in any of these directions would be a significant contribution.

\section*{Acknowledgement}
We want to thank Wenzel Manegold for finding the example of Theorem 3.1, and Lukas Graf and Neil Olver for fruitful discussion on the topic of this work. 
We also thank the participants of the Dagstuhl Seminar 24281 \cite{correa2025dynamic} for helpful discussions and questions.  

    \bibliography{refs}
\end{document}